\documentclass[preprint,12pt,authoryear]{elsarticle}
\usepackage[T1]{fontenc}
\usepackage[utf8]{inputenc}
\usepackage{amsmath,amsthm,amssymb}
\usepackage{newtxtext,newtxmath}
\usepackage[margin=1in]{geometry}
\usepackage{setspace}
\usepackage{microtype}
\usepackage{booktabs,array,tabularx,longtable}
\usepackage{graphicx}
\usepackage{algorithm}
\usepackage{algpseudocode}
\usepackage{enumitem}
\usepackage{placeins}
\usepackage{needspace}
\usepackage[hidelinks]{hyperref}
\usepackage{url}
\newtheorem{theorem}{Theorem}[section]
\newtheorem{lemma}[theorem]{Lemma}
\newtheorem{proposition}[theorem]{Proposition}
\newtheorem{corollary}[theorem]{Corollary}
\theoremstyle{definition}
\newtheorem{assumption}[theorem]{Assumption}
\newtheorem{remark}[theorem]{Remark}
\newcommand{\Om}{O_m}
\newcommand{\OmK}{O_{m,K}}

\newcommand{\Z}{\mathbb{Z}}

\newcommand{\dom}{\operatorname{dom}}

\newcommand{\LS}{\textnormal{LS-PC}}

\newcommand{\nvec}{\boldsymbol{n}}
\newcommand{\Nvec}{\boldsymbol{N}}
\newcommand{\evec}{\boldsymbol{e}}
\setlist{nosep}
\numberwithin{equation}{section}
\makeatletter
\def\ps@pprintTitle{\let\@oddhead\@empty\let\@evenhead\@empty\def\@oddfoot{\hfil\thepage\hfil}\let\@evenfoot\@oddfoot}
\makeatother
\newcommand{\GeoUncap}{2.531}
\newcommand{\GeoCap}{1.006}
\newcommand{\GeoAll}{1.595}
\newcommand{\NewWins}{186}
\newcommand{\OldWins}{114}
\journal{European Journal of Operational Research}
\hypersetup{pdftitle={Monge matrix searching for lot sizing with piecewise-concave production costs},pdfauthor={Kleitos Papadopoulos}}
\begin{document}
\begin{frontmatter}
\title{Monge matrix searching for lot sizing with piecewise-concave production costs}
\author{Kleitos Papadopoulos}
\ead{kleitospa@gmail.com}
\begin{abstract}
We study single-item lot sizing with piecewise-concave production and holding--backlog costs over $T$ periods. Production has $m$ common positive finite breakpoints, and stock costs and domains have $K$ common finite boundary levels. For fixed $m,K$, we give an exact deterministic algorithm using $\OmK(T^{m+2}\alpha(T+2))$ arithmetic operations and a Las Vegas algorithm using $\OmK(T^{m+2})$ expected operations, where $\alpha$ is the inverse Ackermann function. The algorithms retain nonlinear concave cost pieces and cover backlogging, common-grid stock limits, and explicit terminal-stock policies. A structural decomposition at stock boundary levels yields breakpoint-only prefix and suffix tables. Batching transitions by a designated production period produces implicit matrices that are Monge double staircases on each production piece. Established matrix searches, merged state orders, and backward evaluation provide the bounds. A simpler rectangle/SMAWK variant requires $\OmK(T^{m+2}\log(T+2))$ operations. On 300 no-backlogging benchmark cases, its objectives agree exactly with the predecessor dynamic program and an inventory-state oracle. Geometric-mean paired time ratios are \GeoUncap\ for uncapacitated cases and \GeoCap\ for capacitated cases, with ratios above one favoring matrix searching. A 640-case base-model validation suite and a separate 624-case generalized-model suite pass. The timing study does not benchmark the generalized models or the stronger staircase routines. All bounds assume exact arithmetic, constant-time cost queries, and explicit closed-domain concavity conditions.
\end{abstract}
\begin{keyword}
Lot sizing \sep Dynamic programming \sep Concave costs \sep Backlogging \sep Monge matrices
\end{keyword}
\end{frontmatter}

\section{Introduction}\label{sec:intro}
Production and procurement decisions often depend on quantity thresholds. A threshold may separate production regimes or mark a minimum order quantity or capacity. Within a regime, total cost can be concave even though the complete production-cost function is not globally concave. Holding and backlog costs can likewise change regime at specified stock levels. We study single-item lot sizing with these piecewise-concave costs, denoted by $\LS$, over a finite horizon. Negative stock represents backlogged demand, and unavailable production or stock ranges are imposed through the cost domains.

\citet{koca2014} established polynomial solvability of the common-production-breakpoint problem when the number of breakpoints is fixed. Their main algorithm uses the structural result that an optimal zero-inventory regeneration interval contains at most one production quantity outside the breakpoint set, together with a breakpoint-only final interval when positive terminal inventory is necessary. The underlying structure originates in \citet{swoveland1975}. Their additional cases include backlogging and piecewise-concave holding costs, for which regeneration states must record stock boundary levels. The computational difficulty is to combine the breakpoint production schedules before and after the exceptional production period.

For $T$ periods and $m$ common positive finite production breakpoints, the general no-backlogging algorithm of \citet{koca2014} requires $O(T^{2m+3})$ time. Its common-capacity restriction has an $O(T^{2m+2})$ bound. For related piecewise-linear models, \citet{ou2017} obtains $O(T^{m+2}\log T)$ time, and \citet{papadopoulos2026} obtains $O(T^{m+2})$ deterministic time. These results motivate ordered states, but affine separability does not directly apply to a nonlinear concave production piece. The cited models are not identified with one another in their treatment of backlogging or terminal inventory.

Our contribution is a matrix-searching reduction that retains nonlinear concave production pieces and admits the stock-boundary states required by the generalized model. For a fixed designated production period $t$, each transition has the form
\begin{equation}\label{eq:intromatrix}
 M_t(i,j)=A_i+C_j+p_t(a_j-\beta_i).
\end{equation}
A row represents a breakpoint-only prefix with a prescribed starting stock boundary. A column represents a breakpoint-only suffix ending at a stock boundary, followed by an optimal continuation. The quantity $a_j-\beta_i$ is the production needed in period $t$. Ordering these inventory keys exposes a Monge inequality on each concave production piece. Its defined entries form a double staircase rather than an arbitrary complete matrix. Holding at period $t$ is a column offset, so piecewise-concave stock costs do not create a further row--column interaction.

Let $K$ be the number of common stock boundary levels, including zero and the initial stock. We combine the reduction with deterministic and randomized staircase-searching algorithms \citep{klawe1990,chan2021}. A linear-total-size decomposition, incrementally merged orders, and backward evaluation give $\OmK(T^{m+2}\alpha(T+2))$ deterministic time and $\OmK(T^{m+2})$ expected Las Vegas time, with $\OmK(T^{m+2})$ worst-case space. The boundary labels enlarge the states but do not add a power of $T$ for fixed $m,K$. These are arithmetic and cost-oracle bounds, not bit-complexity claims or claims of fixed-parameter tractability.

The no-backlogging problem with globally concave holding costs is the specialization $\mathcal H=\{0\}$, with only nonnegative stock allowed. Ordinary backlogging also has $K=1$ when stock costs are concave separately on the two half-lines. Additional common holding-cost breakpoints, inventory limits, and backlog limits are included in $\mathcal H$. The boundary-level structural principle is not new \citep{swoveland1975,koca2014}; the contribution is its integration with the faster nonlinear Monge batching. Production capacities, minimum quantities, and multiple production regimes are represented by the same production domains. Terminal policies are specified explicitly rather than treated as interchangeable.

We implement a simpler $\OmK(T^{m+2}\log(T+2))$ variant using complete-matrix searching by SMAWK \citep{aggarwal1987}. A 300-case no-backlogging experiment compares its original-model specialization with a new implementation of the original Koca--Yaman--Akt\"urk interval recurrences, not with naive enumeration of the new matrices. It records complete objective-only solve times. A separate generalized solver is validated on 624 policy-specific cases with signed stock, piecewise-concave stock costs, and other domain restrictions. The original 640-case validation suite is also retained. Only the original model has timing comparisons; the stronger staircase routines are not implemented or benchmarked.

Section~\ref{sec:model} defines the unified model and its optimal-solution structure. Section~\ref{sec:algorithm} derives the boundary-indexed tables, Monge searches, full algorithm, and correctness and complexity proofs. Section~\ref{sec:experiments} distinguishes the original timing study from the generalized-model validation. Section~\ref{sec:special} gives the principal specializations and scope restrictions, and Section~\ref{sec:conclusion} concludes.

\section{Problem definition and properties of optimal solutions}\label{sec:model}
\subsection{Production, signed stock, and cost domains}\label{subsec:model}
There are $T\geq1$ periods. Period $t$ has nonnegative demand $d_t$, production $x_t\geq0$, and end-of-period stock $s_t$. Positive stock is inventory and negative stock is backlogged demand. The function $p_t(x)$ is the \emph{total production cost}, not a unit price; $h_t(s)$ is the holding or backlog cost. Given initial stock $s_0$ and a terminal set $E$, the problem is
\begin{align}
 \min\quad &\sum_{t=1}^{T}\bigl(p_t(x_t)+h_t(s_t)\bigr),\label{eq:obj}\\
 \text{subject to}\quad &s_{t-1}+x_t=d_t+s_t &&(t=1,\ldots,T),\label{eq:balance}\\
 &x_t\in\dom p_t,\quad s_t\in\dom h_t &&(t=1,\ldots,T),\qquad s_T\in E.\label{eq:domains}
\end{align}
Production and stock feasibility are specified by the following domains. In particular, stock is not constrained to be nonnegative unless its domains impose that restriction.

\begin{assumption}[Piecewise-concave production costs]\label{ass:cost}
Fix $m\geq1$ and common breakpoints
\[
 0=b_0<b_1<\cdots<b_m<b_{m+1}=+\infty.
\]
For each $t$, a supplied family $\mathcal J_t$ consists of selected closed intervals $[b_{\ell-1},b_\ell]$ for $1\leq\ell\leq m$, possibly the ray $[b_m,+\infty)$, and possibly singletons $\{b_\ell\}$ for $0\leq\ell\leq m$. The actual function $p_t$ is finite, nonnegative, and concave on each supplied domain, and is $+\infty$ outside their union. Its actual values at shared endpoints agree across domains.
\end{assumption}
The effective production domain is $\dom p_t=\bigcup_{J\in\mathcal J_t}J\subseteq[0,+\infty)$. Missing intervals model unavailable production ranges; capacities on the common grid and isolated allowable quantities are included. After duplicate domains are removed, $q:=\max_t|\mathcal J_t|\leq2m+2$. The production functions and available domains may vary with time; their finite pieces need not be affine.

\begin{assumption}[Piecewise-concave stock costs]\label{ass:stock}
Let $\mathcal H=\{r_1<\cdots<r_K\}$ be a finite set containing zero and the prescribed initial stock $s_0$. In each period, the available stock domains are selected closed intervals $[r_j,r_{j+1}]$, the possible rays $(-\infty,r_1]$ and $[r_K,+\infty)$, and possible singletons $\{r_j\}$. The actual function $h_t$ is finite, nonnegative, and concave on each supplied domain, and is $+\infty$ outside their union. Values at shared endpoints agree. The set $\mathcal H$ and its size $K$ are common over time, although costs and domain availability may depend on the period.
\end{assumption}
A larger available concave interval can be split at extra levels of $\mathcal H$. Thus zero and $s_0$ can be added without changing any cost. Finite inventory or backlog limits must be included in $\mathcal H$ and enforced through the available domains.

We distinguish three terminal sets in~\eqref{eq:domains}:
\begin{equation}\label{eq:terminalsets}
 E=\{0\},\qquad E=[0,+\infty),\qquad E=\mathbb R.
\end{equation}
The first clears both stock and backlog; the second clears backlog but permits surplus; the third intentionally permits a final backlog with its specified cost. These are different optimization problems. Intersect the available domains of $h_T$ with $E$, assigning $+\infty$ outside $E$. Since $0\in\mathcal H$, this operation preserves Assumption~\ref{ass:stock}. Below, $h_T$ denotes this effective function. The predecessor's zero initial inventory is recovered by taking $s_0=0$.

For the original no-backlogging model, take $s_0=0$, $\mathcal H=\{0\}$, and a finite nonnegative concave holding cost on $[0,+\infty)$, with no negative stock domain. The choices $E=\{0\}$ and $E=[0,+\infty)$ recover its \emph{zero-terminal} and \emph{free-terminal} variants. The latter is the direct comparison with the main formulation of \citet{koca2014}. In the signed-stock model, ``free terminal'' is not used to obscure whether final backlog is allowed.

Both cost assumptions concern actual values on closed domains, not just interior formulas. Downward endpoint jumps, including a production setup discontinuity at zero, are permitted when each closed restriction remains concave. Arbitrary half-open domains or upward endpoint jumps are not covered merely because their interior formulas are concave. Larger concave domains can be split at added common breakpoints without changing costs. No global concavity of $p_t$ or $h_t$ is assumed.

\subsection{Notation and computational model}\label{subsec:notation}
Let $Q_0=0$, $Q_t=\sum_{k=1}^{t}d_k$, and $D(u,v)=Q_v-Q_{u-1}$, with $D(u,u-1)=0$. For $k\geq0$ define
\begin{equation}\label{eq:counts}
 \mathcal N_k=\{\nvec\in\Z_{\geq0}^{m}:|\nvec|\leq k\},\qquad
 |\nvec|=\sum_{\ell=1}^{m}n_\ell,\qquad
 w(\nvec)=\sum_{\ell=1}^{m}n_\ell b_\ell.
\end{equation}
The count $n_\ell$ records productions at $b_\ell$. The remaining $k-|\nvec|$ periods produce zero, so zero-production periods are counted implicitly, not omitted. Consequently $|\mathcal N_k|=\binom{k+m}{m}$. Write $\evec_\ell$ for the $\ell$th unit vector, with $\evec_0=0$. Invalid states and minima over empty sets have value $+\infty$.

One exact arithmetic operation, comparison, array access, or supplied cost evaluation counts as one operation. Randomized searching additionally uses unit-cost independent random words and uniform index sampling. The notation $\OmK(\cdot)$ permits constants depending on fixed $m,K$; $\Om(\cdot)$ is used when $K=1$. These notations do not assert fixed-parameter tractability. The function $\alpha$ is the inverse Ackermann function. No constant-time manipulation of arbitrary symbolic representations or bounded bit lengths for exact numbers is assumed.

\subsection{Existence of an optimum}\label{subsec:attainment}
\begin{lemma}[Attainment]\label{lem:attain}
Every feasible suffix starting at a level $\gamma\in\mathcal H$ has a finite attained optimum under each terminal convention in~\eqref{eq:terminalsets}.
\end{lemma}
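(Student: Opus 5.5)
The plan is to reduce the suffix problem to minimizing a lower semicontinuous, bounded-below function over a set that can be split into finitely many compact pieces. Fix a starting period $u$ and a level $\gamma\in\mathcal H$ with a feasible suffix. The decision variables are the stocks $(s_u,\ldots,s_T)$, and production is recovered from~\eqref{eq:balance} as $x_t=d_t+s_t-s_{t-1}$. Feasibility means choosing, for each $t$, one supplied production domain $J_t\in\mathcal J_t$ and one supplied stock domain $H_t$, and requiring $x_t\in J_t$ and $s_t\in H_t$. Since there are finitely many such domain assignments, it suffices to show that each nonempty assignment attains its infimum. The overall optimum is then the minimum over the finitely many assignments. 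Each assignment defines a closed polyhedron $P$ in stock space, because all domains are closed intervals, rays, or singletons. On $P$ the objective is a sum of restrictions of $p_t$ and $h_t$ to closed domains on which they are finite and concave, and so it is continuous on the relative interior. Finiteness of the value is immediate: every cost is nonnegative and a feasible point exists, so the infimum is finite and nonnegative.

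The main obstacle is attainment, for two reasons. First, $P$ may be unbounded, because of the rays $[b_m,\infty)$, $(-\infty,r_1]$, $[r_K,\infty)$, and $E=\mathbb R$. Second, a concave function on a closed interval is continuous only in the interior: at endpoints it can jump downward, which Assumption~\ref{ass:cost} explicitly allows. This gives lower semicontinuity failure in the wrong direction, so one cannot just take limits.

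To handle the second issue, I would use concavity on a polytope. The objective restricted to $P$ is a sum of functions of the form $f(\ell(s))$, where $\ell$ is affine and $f$ is concave on a closed interval. Each such term is concave on $P$, including at boundary values, because concavity with actual endpoint values is preserved under affine precomposition. The total objective $F$ is therefore concave on the closed convex set $P$. To handle unboundedness, I would use nonnegativity. A concave function that is bounded below on a closed convex polyhedron $P$ is nondecreasing along every recession direction of $P$: if it decreased along a ray, concavity would force it to $-\infty$. Write $P=\operatorname{conv}(V)+\operatorname{rec}(P)$, where $V$ is the finite vertex set; $V$ is nonempty because the stock coordinates are pinned by the balance equations and the starting level $\gamma$, so $P$ is pointed. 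Then any point $v+r$ satisfies $F(v+r)\ge F(v)$ for the relevant combination $v\in\operatorname{conv}(V)$. Concavity on the polytope $\operatorname{conv}(V)$ gives $F(v)\ge\min_{w\in V}F(w)$. The minimum over $P$ is thus attained at a vertex, which settles attainment with no continuity argument at all.

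Finally, I would check that the terminal conventions fit this framework. After intersecting with $E$, the effective $h_T$ still satisfies Assumption~\ref{ass:stock}, as noted after~\eqref{eq:terminalsets}, so the three terminal choices only change the finite list of admissible domains $H_T$. The vertex argument applies verbatim to each, and the suffix optimum is attained and finite under each convention.
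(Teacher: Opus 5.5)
Your argument is correct, but it follows a different route from the paper. The paper works on the whole feasible set at once. It truncates every production at the explicit level $B=\max\{b_m,D_*+r_K-r_1\}$, using that a nonnegative concave function on a right ray is nondecreasing and checking that every affected stock stays on the ray $[r_K,+\infty)$. It then applies Weierstrass on the resulting compact set, since each closed concave restriction with consistent shared endpoint values is lower semicontinuous. You instead split the feasible set into finitely many pointed polyhedra by domain assignment. You then use the multivariate form of the same monotonicity fact (a bounded-below concave function is nondecreasing along recession directions), and conclude via Minkowski--Weyl and Jensen that each piece attains its minimum at a vertex. Your route needs no topology and no explicit bound, and it yields more: an optimal vertex already has the structure of Lemma~\ref{lem:structure}. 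The perturbation directions used in that proof keep every tight domain constraint tight, so they cannot exist at a vertex. The paper's route is shorter given standard facts and supplies an explicit a priori bound on production quantities.

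Two small corrections. First, pointedness is best justified by noting that every production domain lies in $[0,+\infty)$. Hence a lineality direction $r$ with $r_{u-1}=0$ has $r_t-r_{t-1}=0$ for all $t$ and vanishes; the fact that stocks are determined by productions and $\gamma$ is not enough by itself. Second, your stated reason for avoiding a limit argument is wrong. Concavity on a closed interval $[a,c]$ gives $f(a)\le\liminf_{x\to a^+}f(x)$, so downward endpoint jumps are exactly the ones compatible with lower semicontinuity, and the paper's compactness argument relies on this. Upward jumps would defeat attainment, and the closed-domain concavity assumption excludes them. Neither point affects the validity of your vertex argument.
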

\begin{proof}
Let $D_*=D(1,T)$ and $B=\max\{b_m,D_*+r_K-r_1\}$. Consider a feasible suffix starting with $\gamma\in\mathcal H$. If some production $x_i>B$, replace it by $B$. For every affected period $k\geq i$, the new stock is at least
\[
 \gamma+B-D(u,k)\geq r_1+B-D_*\geq r_K,
\]
where $u$ is the suffix's starting period. The original stock was strictly larger. Hence each affected finite stock cost is on an available rightmost ray, and the new stock remains on that ray. The changed production also remains on its available rightmost ray. A nonnegative concave function on a right ray is nondecreasing, so neither production nor holding cost increases. The effective terminal restriction is preserved; if its right ray is unavailable, a production exceeding $B$ could not have been feasible in the first place.

Repeating bounds every production by $B$. The resulting nonempty feasible set is compact: production is bounded and stock is determined by balance. Each closed concave restriction is lower semicontinuous; the actual shared endpoint values are consistent, and the domain families are finite. The extended-value objective is therefore lower semicontinuous on that set and attains its minimum.
\end{proof}

Nonnegativity gives a sufficient well-posedness condition. The structural and algorithmic arguments also apply to real-valued finite costs under the same domain and concavity assumptions whenever every feasible suffix starting at a level in $\mathcal H$ has a finite attained optimum.

\subsection{Boundary intervals and exceptional production}\label{subsec:structure}
The structural principle below is the boundary-level counterpart of the regeneration property used by \citet{koca2014}; see also \citet{swoveland1975} and Remark~1 and Section~3.4 of \citet{koca2014}. We include its proof to make the endpoint, stock-domain, and terminal assumptions explicit.

Call $t$ a \emph{boundary time} when $s_t\in\mathcal H$. A boundary interval $[u,v]$ has $s_{u-1},s_v\in\mathcal H$ and no internal boundary time. A final tail has no boundary time after its starting boundary, including at $T$. A fractional production is a quantity outside $\{b_0,\ldots,b_m\}$, not necessarily a noninteger quantity.

\begin{lemma}[Boundary-interval structure]\label{lem:structure}
There is an optimum in which each boundary interval has at most one fractional production. If its terminal stock does not belong to $\mathcal H$, the final tail uses breakpoint productions only.
\end{lemma}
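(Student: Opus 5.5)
We argue by an exchange argument on an optimal solution chosen to be extremal. By Lemma~\ref{lem:attain}, an optimum exists. Among all optima, we choose one that minimizes the number of fractional productions, breaking ties lexicographically if needed. We then show that any violation of the statement yields another optimum with strictly fewer fractional productions.

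\textbf{Main perturbation.} Suppose a boundary interval $[u,v]$, or a final tail starting at $u$ with $s_T\notin\mathcal H$, contains fractional productions. In the boundary-interval case, suppose there are two of them, at periods $i<j$. In the tail case, one at period $i$ suffices.

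For the two-production case, consider the one-parameter family
\[
 x_i(\varepsilon)=x_i+\varepsilon,\qquad x_j(\varepsilon)=x_j-\varepsilon,
\]
with all other productions unchanged. Then $s_k$ shifts by $+\varepsilon$ for $i\le k<j$ and is unchanged elsewhere. In particular $s_{u-1}$ and $s_v$ stay fixed.

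For the tail case, take $x_i(\varepsilon)=x_i+\varepsilon$ alone. This shifts $s_k$ by $\varepsilon$ for all $k\ge i$, including $s_T$.

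\textbf{Concavity of the cost along the family.} Each fractional $x_i$ lies in the open interior of a single supplied interval $[b_{\ell-1},b_\ell]$ or ray; it cannot sit on a singleton, since singletons are breakpoints. Each shifted stock $s_k$ lies strictly between consecutive levels of $\mathcal H$, or beyond $r_1$ or $r_K$. This holds because $k$ is not a boundary time, and in the tail case because $s_T\notin\mathcal H$.

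Hence, for $\varepsilon$ in a maximal closed interval $[\varepsilon^-,\varepsilon^+]\ni 0$, every affected variable stays in the same closed domain piece. At an endpoint of this interval, one of two things happens. Either some affected production hits a breakpoint $b_\ell$, or some affected stock hits a level of $\mathcal H$.

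On this interval the total cost is a sum of concave functions of $\varepsilon$. The stock domain of $h_k$ for an interior stock is a supplied closed interval or ray containing $s_k$. Since $s_k\notin\mathcal H$, the containing piece is unique and must be supplied, or $s_k$ would be infeasible. The same holds for productions. The effective terminal function $h_T$ already encodes $E$.

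A concave function on a closed interval attains its minimum at an endpoint. If the interval is unbounded in one direction, a nonnegative concave function on a ray is monotone, or we use the bounded side. So moving to one endpoint does not increase cost, and the result remains optimal.

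\textbf{Reaching a contradiction.} At the chosen endpoint, two outcomes are possible.
\begin{itemize}
\item A fractional production becomes a breakpoint. The count of fractional productions strictly decreases, contradicting the extremal choice.
\item A stock $s_k$ reaches a level in $\mathcal H$. Then $[u,v]$ splits into two boundary intervals, or the tail gains a boundary time, and the fractional count is unchanged.
\end{itemize}

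To handle the second outcome, we use a secondary potential. We choose the optimum to minimize fractional count first, then maximize the number of boundary times. Splitting strictly increases the number of boundary times, which is bounded by $T$, so the process terminates. After termination, each boundary interval has at most one fractional production, and a final tail ending off $\mathcal H$ has none.

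In the tail case, an endpoint reached by $s_T$ landing in $\mathcal H$ creates a boundary time at $T$. That interval is then a boundary interval rather than a final tail, which is consistent with the statement.

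\textbf{Main obstacle.} The delicate step is justifying that the feasible $\varepsilon$-set is a closed interval on which the cost is a single concave function. Two issues need care here.

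The first concerns endpoint values. An endpoint of the interval may coincide with a domain boundary, where another domain piece, or a downward jump, takes over. Assumption~\ref{ass:cost} and Assumption~\ref{ass:stock} state that the actual values at shared endpoints agree and that each closed restriction is concave. Therefore the endpoint value equals the limit from inside the piece or is smaller, so concavity on the closed piece gives the endpoint bound.

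The second concerns the possible unboundedness of $\varepsilon^+$ on rightmost rays. Here monotonicity of nonnegative concave functions on rays lets us move toward the finite side instead.

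I expect this bookkeeping, rather than the exchange idea itself, to be the main obstacle.
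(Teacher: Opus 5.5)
Your proposal is correct and follows essentially the paper's own proof. It uses an extremal optimum chosen by two integer criteria, the exchange $x_i+z$, $x_j-z$ inside a boundary interval, and a one-sided shift of $x_i$ and all later stocks in the final tail; concavity along the segment then gives an optimal endpoint where a production reaches a breakpoint or a stock reaches $\mathcal H$. The differences are minor: you rank fractional count before boundary-time count (the paper uses the reverse order, and both work because each move weakly improves both), and you handle an unbounded tail direction through monotonicity of nonnegative concave functions, whereas the paper uses optimality at the interior point $z=0$, which does not need nonnegativity.
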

\begin{proof}
Among all optima, choose one maximizing the number of boundary times and, subject to that, minimizing the number of fractional productions. Existence follows from Lemma~\ref{lem:attain} and the finiteness of these integer criteria.

Suppose fractional periods $i<j$ occur in one boundary interval. Let $[\ell_i,u_i]$ and $[\ell_j,u_j]$ be their available production pieces. Each affected stock $s_k$, $i\leq k<j$, lies strictly inside an available stock piece $[a_k,c_k]$; an endpoint can be infinite. Change $x_i$ to $x_i+z$ and $x_j$ to $x_j-z$. The affected stocks change by $z$. This perturbation stays in those same pieces for
\[
 -\delta_-\leq z\leq\delta_+,
\]
where
\begin{align*}
 \delta_-&=\min\{x_i-\ell_i,\ u_j-x_j,\ \min_{i\leq k<j}(s_k-a_k)\},\\
 \delta_+&=\min\{u_i-x_i,\ x_j-\ell_j,\ \min_{i\leq k<j}(c_k-s_k)\}.
\end{align*}
Both numbers are positive and finite, since the first contains $x_i-\ell_i$ and the second contains $x_j-\ell_j$. The objective along the segment is concave. Optimality at the interior point $z=0$ forces both endpoint costs to equal its cost. At either endpoint, a production reaches a finite production breakpoint or an affected stock reaches a finite level in $\mathcal H$. No previous boundary time changes. A new boundary contradicts maximality; otherwise a fractional production is eliminated, contradicting the second criterion.

For a fractional period $i$ in a final tail, change only $x_i$ by $z$ and all stocks from $i$ through $T$ by $z$. A negative step of length
\[
 \delta=\min\{x_i-\ell_i,\ \min_{i\leq k\leq T}(s_k-a_k)\}>0
\]
reaches a production or stock boundary. A sufficiently small positive step is also feasible in the same pieces, including the effective terminal domain. Concavity and optimality again make the negative endpoint optimal, contradicting one of the two selection criteria.
\end{proof}

For ordinary backlogging, $\mathcal H=\{0\}$ and $h_t$ is concave on each of $(-\infty,0]$ and $[0,+\infty)$. Internal stock is required to be nonzero, not positive. Its sign need not remain the same throughout an interval: a balance step may move from negative to positive stock without equalling zero. The proof uses a separate concave stock piece at each affected time and therefore covers such a sign change.

\begin{corollary}[Lattice optimum]\label{cor:integer}
If demands, production breakpoints, all levels of $\mathcal H$, and $s_0$ belong to $\Delta\mathbb Z$ for some $\Delta>0$, every feasible instance has an optimum with all production and stock quantities in $\Delta\mathbb Z$.
\end{corollary}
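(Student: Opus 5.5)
Take the optimum supplied by Lemma~\ref{lem:structure} and show that every quantity in it is forced into $\Delta\mathbb Z$ by balance. All data lie in the additive group $\Delta\mathbb Z$: the demands $d_t$, the breakpoints $b_\ell$, the levels of $\mathcal H$, and $s_0$. The balance equations~\eqref{eq:balance} have integer coefficients. Hence a stock is in $\Delta\mathbb Z$ whenever the previous stock and the current production are, and a production is in $\Delta\mathbb Z$ whenever both adjacent stocks are.

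First, partition the horizon at the boundary times $0=t_0<t_1<\cdots<t_r$. Here $t_0=0$ counts because $s_0\in\mathcal H$. Each $[t_{k-1}+1,t_k]$ is a boundary interval, and if $t_r<T$ there is a final tail $[t_r+1,T]$. Since $s_{t_k}\in\mathcal H\subseteq\Delta\mathbb Z$, all boundary stocks lie in the lattice.

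Next, treat one boundary interval $[u,v]$ with $s_{u-1},s_v\in\Delta\mathbb Z$. Summing balance gives
\[
\sum_{k=u}^{v}x_k=D(u,v)+s_v-s_{u-1}\in\Delta\mathbb Z.
\]
By Lemma~\ref{lem:structure}, at most one production in the interval is fractional. All the others are breakpoints $b_\ell\in\Delta\mathbb Z$, with $b_0=0$ included. The single exceptional production, if present, equals the lattice sum minus lattice values, so it also lies in $\Delta\mathbb Z$. Propagating balance forward from $s_{u-1}$ then places every internal stock in $\Delta\mathbb Z$.

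For the final tail, there are two cases.
\begin{itemize}
\item If $s_T\in\mathcal H$, then $T$ is a boundary time and there is no tail.
\item Otherwise, Lemma~\ref{lem:structure} says every tail production is a breakpoint. Forward propagation from the lattice stock $s_{t_r}$ then places every tail stock, including $s_T$, in $\Delta\mathbb Z$.
\end{itemize}

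The one point needing care is the hypothesis ``every feasible instance has an optimum.'' Lemma~\ref{lem:structure} is stated for an optimum whose existence comes from Lemma~\ref{lem:attain}, applied to the full problem starting at $\gamma=s_0\in\mathcal H$. I would note this explicitly; nothing else is needed.
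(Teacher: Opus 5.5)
Your proposal is correct and follows the paper's proof essentially line for line. You take the structured optimum of Lemma~\ref{lem:structure}, use $\sum_{k=u}^{v}x_k=D(u,v)+s_v-s_{u-1}$ to place the single exceptional production of each boundary interval in $\Delta\mathbb Z$, use the breakpoint-only final tail, and recover all stocks by balance; your explicit remarks that $t_0=0$ is a boundary time because $s_0\in\mathcal H$ and that existence comes from Lemma~\ref{lem:attain} are details the paper leaves implicit.
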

\begin{proof}
In a boundary interval from $\gamma$ to $\eta$, total production is $D(u,v)+\eta-\gamma$. All but at most one production are breakpoint quantities, so the remaining production is on the same lattice. A possible final tail consists entirely of breakpoint productions. Balance gives the stock quantities.
\end{proof}

\paragraph{Why the extra boundary states are necessary.}
Take $T=2$, demands $(0,4)$, admissible production $[0,6]$, $p_1(x)=p_2(x)=x(6-x)$, $h_1(s)=20|s-2|$ for $s\geq0$, $h_2(s)=0$, and $s_0=s_2=0$. The production breakpoint set is $\{0,6\}$ and the stock boundary set is $\{0,2\}$. The optimum is $(x_1,x_2)=(2,2)$, with cost $16$ and $s_1=2$. There are two fractional productions between the two zero-stock times, but only one in each interval separated at stock $2$. Restricting this zero-to-zero block to one fractional production leaves only $(0,4)$ and $(4,0)$, both of cost $48$. For $x_1\in[0,4]$, the total cost is $8+8x_1-2x_1^2+20|x_1-2|$, which is concave on each side of $2$; comparing $x_1\in\{0,2,4\}$ proves the stated optimum. Thus merely evaluating a more general holding-cost oracle in a zero-boundary-only algorithm is not a valid extension. This example is also included in the executable validation suite.

The algorithm may allow internal visits to $\mathcal H$ in a candidate block. Such visits add feasible candidates without removing the structured optimum. A designated period need not actually be fractional: this convention represents breakpoint-only boundary intervals as well.

\section{An improved exact algorithm}\label{sec:algorithm}
We develop one algorithm for the model of Section~\ref{sec:model}. It separates breakpoint-only schedules on either side of a designated period and records the starting and ending stock boundary levels. The original no-backlogging algorithm is obtained by keeping only the boundary level zero and the nonnegative stock domain.

\subsection{Boundary-indexed prefix and suffix tables}\label{subsec:tables}
For $\gamma\in\mathcal H$, let $L^\gamma_{u,k}(\nvec)$ be the minimum cost on periods $u,\ldots,k$ with incoming stock $s_{u-1}=\gamma$ and breakpoint-only production counts $\nvec$. Holding through $k$ is included, and its final stock is
\[
 s_k=\gamma+w(\nvec)-D(u,k).
\]
Set $L^\gamma_{u,u-1}(\boldsymbol0)=0$, and give all other empty-block counts value $+\infty$. For $\nvec\in\mathcal N_{k-u+1}$,
\begin{align}
 L^\gamma_{u,k}(\nvec)
 &=h_k\bigl(\gamma+w(\nvec)-D(u,k)\bigr)\notag\\
 &\quad+\min_{0\leq\ell\leq m}
 \{L^\gamma_{u,k-1}(\nvec-\evec_\ell)+p_k(b_\ell)\}.\label{eq:left}
\end{align}
An invalid predecessor or infinite cost excludes that transition. In particular, the zero-production predecessor must belong to its own, shorter count set.

For $\eta\in\mathcal H$, let $R^\eta_{k,v}(\nvec)$ be the minimum cost on periods $k,\ldots,v$ with breakpoint-only counts $\nvec$ and $s_v=\eta$. Its required incoming stock is
\[
 s_{k-1}=\eta+D(k,v)-w(\nvec).
\]
Holding at $k-1$ is not included. Set $R^\eta_{v+1,v}(\boldsymbol0)=0$, with all other empty-block counts infinite. For $\nvec\in\mathcal N_{v-k+1}$,
\begin{align}
 R^\eta_{k,v}(\nvec)
 &=\min_{0\leq\ell\leq m}\bigl\{p_k(b_\ell)
 +h_k\bigl(\eta+D(k+1,v)-w(\nvec-\evec_\ell)\bigr)\notag\\
 &\hspace{39mm}+R^\eta_{k+1,v}(\nvec-\evec_\ell)\bigr\}.\label{eq:right}
\end{align}
Only valid predecessors and finite costs are admitted. There is no blanket nonnegative-stock test: the supplied $h_k$ enforces its stock domains, and the effective $h_T$ enforces the terminal restriction. Each table checks end-of-period stock within its own interval. The right table's incoming stock is checked when the preceding period is attached; no cost at $k-1$ belongs in $R^\eta_{k,v}$. A left table with prescribed incoming $\gamma$ likewise excludes the preceding period's cost.

\begin{lemma}[Table correctness and cost]\label{lem:tables}
Equations~\eqref{eq:left} and~\eqref{eq:right} compute their stated optimal values. For fixed $m,K$, all tables and minimizing production choices require $\OmK(T^{m+2})$ time and space.
\end{lemma}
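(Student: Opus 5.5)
Correctness will be proved by induction on the block length $k-u+1$ for $L^\gamma_{u,k}$, and on $v-k+1$ for $R^\eta_{k,v}$. The base cases are the empty blocks, whose values are fixed by definition. For the inductive step of~\eqref{eq:left}, fix a breakpoint-only schedule on $u,\ldots,k$ with incoming stock $\gamma$ and counts $\nvec$. Let $b_\ell$ be its production in period $k$. Then its restriction to $u,\ldots,k-1$ is a breakpoint-only schedule with counts $\nvec-\evec_\ell$. Its cost splits as the prefix cost, plus $p_k(b_\ell)$, plus $h_k(s_k)$.

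Balance gives $s_k=\gamma+w(\nvec)-D(u,k)$ independently of $\ell$. Hence the holding term can be pulled outside the minimum, and the prefix cost is minimized independently; this is the standard optimal-substructure exchange. Conversely, every finite right-hand candidate corresponds to a feasible schedule. Invalid predecessors, namely negative counts or $|\nvec-\evec_\ell|>k-u$, take value $+\infty$ by convention, as do infinite costs. In particular, $\ell=0$ requires $\nvec\in\mathcal N_{k-u}$.

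The argument for~\eqref{eq:right} is symmetric, peeling period $k$. If $x_k=b_\ell$, the remaining counts are $\nvec-\evec_\ell$ on $k+1,\ldots,v$ with $s_v=\eta$. The stock at the end of period $k$ is then $\eta+D(k+1,v)-w(\nvec-\evec_\ell)$, so the $h_k$ term is exactly the one charged in~\eqref{eq:right}. The incoming stock $s_{k-1}$ is not charged, consistent with the definition. The one point to state carefully is that domain feasibility and the effective terminal restriction are enforced purely through $+\infty$ values of $h_k$, $h_T$ and $p_k$. This needs no separate sign test, because the definitions quantify only over schedules with finite cost.

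For the complexity bound, count index tuples. The left tables are indexed by $\gamma\in\mathcal H$ ($K$ choices), pairs $u\leq k+1$ ($O(T^2)$), and $\nvec\in\mathcal N_{k-u+1}$. Since $|\mathcal N_{k-u+1}|=\binom{k-u+1+m}{m}=\Om(T^m)$, there are $\OmK(T^{m+2})$ entries in total; the right tables are counted the same way. Each entry takes a minimum over $m+1$ predecessors. Each predecessor costs $O(1)$ operations, provided three things are available in constant time: $w(\nvec)$ and $w(\nvec-\evec_\ell)$, prefix sums $Q_t$ giving $D(\cdot,\cdot)$, and constant-time cost queries. This gives $O(m)$ per entry and $\OmK(T^{m+2})$ overall, including storage of the argmin $\ell$ for reconstruction.

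The only real obstacle is the indexing. Vectors $\nvec\in\mathcal N_k$ must be mapped to array positions, and $\nvec\mapsto\nvec-\evec_\ell$ must be evaluable in $O(1)$ time without hashing. I would first try a combinatorial-number-system ranking, with precomputed rank offsets for each unit decrement. Alternatively, a dense array over the box $\{0,\ldots,T\}^m$ costs only $O(T^m)$ per $(u,k)$ pair, still within the bound for fixed $m$. The same dense layout lets $w(\nvec)$ be computed incrementally as $w(\nvec-\evec_\ell)+b_\ell$ and stored alongside each entry. The tables are computed in order of increasing block length, so every predecessor is available when needed.
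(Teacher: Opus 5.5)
Your proposal is correct and follows essentially the paper's argument: induction on block length by conditioning on the last production for $L^\gamma$ and the first production for $R^\eta$, with the charged stock fixed by the boundary level and the counts, followed by a count of $O(T^2)$ intervals times $K$ labels times $\Om(T^m)$ count vectors, each with $m+1$ choices and constant-time array access. One small caveat: a combinatorial-number-system rank does not in general change by a fixed offset under $\nvec\mapsto\nvec-\evec_\ell$, so use a precomputed predecessor-index table or your dense-box layout, which matches the paper's ``arrays indexed by count coordinates''.
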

\begin{proof}
Conditioning on the final production of a left block gives~\eqref{eq:left}; its ending stock is fixed by the count and incoming boundary. Conditioning on the first production of a right block gives~\eqref{eq:right}; its incoming and next stock follow from its ending boundary and counts. Every admitted predecessor extends to a feasible schedule for the relevant interval, and every feasible schedule has one such predecessor. Induction on interval length proves optimality, including all charged stock-domain tests.

There are $O(T^2)$ intervals, $K$ boundary labels per table, and at most $\binom{T+m}{m}=\Om(T^m)$ counts per interval and label. A state examines at most $m+1$ choices. Weights take $O(m)$ arithmetic operations and can be stored. Arrays indexed by interval endpoints, boundary indices, and count coordinates give deterministic constant-time access for fixed $m,K$. Witnesses preserve these bounds.
\end{proof}
Breakpoint-only count subproblems are part of the established dynamic-programming approach, not a separate novelty claim \citep{koca2014,ou2017,papadopoulos2026}. Define the breakpoint-only terminal alternative
\begin{equation}\label{eq:tail}
 Z_{u,\gamma}=\min_{\nvec\in\mathcal N_{T-u+1}}L^\gamma_{u,T}(\nvec).
\end{equation}
This definition already enforces the terminal policy through $h_T$. It is valid for all three terminal sets. Under zero terminal stock it supplies redundant but feasible breakpoint-only alternatives; these are also represented by the main transitions.

\subsection{An exact batched Bellman equation}\label{subsec:batch}
Let $\Psi_{u,\gamma}$ be the true optimum on periods $u,\ldots,T$ with incoming stock $\gamma\in\mathcal H$, excluding any cost at $u-1$. Initialize $\Psi_{T+1,\eta}=0$ for $\eta\in\mathcal H\cap E$ and $+\infty$ otherwise. Section~\ref{subsec:sweep} gives an evaluation order without circular dependencies.

Fix a designated period $t$. An admitted row $i=(u,\gamma,\nvec)$ has $1\leq u\leq t$, $\gamma\in\mathcal H$, $\nvec\in\mathcal N_{t-u}$, and finite prefix cost. Define
\begin{equation}\label{eq:row}
 \beta_i=\gamma+w(\nvec)-D(u,t),\qquad
 A_i=L^\gamma_{u,t-1}(\nvec).
\end{equation}
An admitted column $j=(v,\eta,\Nvec)$ has $t\leq v\leq T$, $\eta\in\mathcal H$, $\Nvec\in\mathcal N_{v-t}$, and finite column cost, where
\begin{align}
 a_j&=\eta+D(t+1,v)-w(\Nvec),\notag\\
 C_j&=R^\eta_{t+1,v}(\Nvec)+h_t(a_j)+\Psi_{v+1,\eta}.\label{eq:col}
\end{align}
Both inventory keys may be negative. The actual inventory before production in $t$ is $\beta_i+d_t$, while $a_j$ is the required stock just after $t$. Their bridge production is
\begin{equation}\label{eq:bridge}
 x_t=a_j-\beta_i
 =D(u,v)+\eta-\gamma-w(\nvec)-w(\Nvec).
\end{equation}
Consequently the implicit transition cost is
\begin{equation}\label{eq:matrix}
 M_t(i,j)=A_i+C_j+p_t(a_j-\beta_i),
 \qquad a_j-\beta_i\in\dom p_t.
\end{equation}
All admitted entries are finite; entries outside the production domain are undefined.

\begin{lemma}[Exact transition identity]\label{lem:transition}
Equation~\eqref{eq:matrix} gives the minimum cost for its prescribed prefix, designated production, boundary-ending suffix, and optimal continuation. Every defined entry is a feasible suffix candidate with its row's incoming boundary. Every boundary interval with at most one fractional production is represented.
\end{lemma}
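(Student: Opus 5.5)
We prove the three assertions in the order they are stated. Each step combines the prefix/suffix tables of Lemma~\ref{lem:tables} with the balance equation~\eqref{eq:balance} and the cost accounting convention: every table charges end-of-period stock only inside its own interval.

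\emph{Step 1: the identity.} Fix a row $i=(u,\gamma,\nvec)$ and a column $j=(v,\eta,\Nvec)$ for which $x_t=a_j-\beta_i\in\dom p_t$. Consider the schedules on periods $u,\ldots,T$ that satisfy three conditions:
\begin{itemize}
\item the incoming stock is $s_{u-1}=\gamma$, and periods $u,\ldots,t-1$ use breakpoint productions with counts $\nvec$;
\item periods $t+1,\ldots,v$ use breakpoint productions with counts $\Nvec$, and $s_v=\eta$;
\item the remainder after $v$ is arbitrary.
\end{itemize}
By balance over $u,\ldots,t-1$, the stock entering $t$ is $\gamma+w(\nvec)-D(u,t-1)=\beta_i+d_t$. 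Backward balance from $s_v=\eta$ forces $s_t=a_j$. Hence $x_t=a_j-\beta_i$ is determined, which is~\eqref{eq:bridge}.

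The total cost then splits into five independent parts:
\begin{itemize}
\item the prefix cost on $u,\ldots,t-1$, including holding at $t-1$, whose minimum is $L^\gamma_{u,t-1}(\nvec)=A_i$;
\item the production cost $p_t(x_t)$;
\item the holding cost $h_t(a_j)$;
\item the suffix cost on $t+1,\ldots,v$, which excludes holding at $t$, whose minimum is $R^\eta_{t+1,v}(\Nvec)$;
\item the continuation from $v+1$ with incoming stock $\eta$, whose minimum is $\Psi_{v+1,\eta}$.
\end{itemize}
The stocks at the junctions are fixed, so these minimizations are independent. Summing gives $M_t(i,j)$.

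\emph{Step 2: feasibility.} Every defined entry has finite $A_i$ and $C_j$, and $x_t\in\dom p_t$. We take the witnesses of the two tables from Lemma~\ref{lem:tables} and an optimal continuation, which exists by Lemma~\ref{lem:attain} whenever $\Psi_{v+1,\eta}<\infty$. When $v=T$ we use the initialization $\Psi_{T+1,\eta}$, which requires $\eta\in E$; the effective $h_T$ already enforces the terminal set. Concatenating these pieces gives a schedule satisfying every domain constraint, because each stock is charged exactly once. Its cost is $M_t(i,j)$, so the entry is an upper bound for $\Psi_{u,\gamma}$.

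\emph{Step 3: representation.} Take a boundary interval $[u,v]$ with at most one fractional production, with $s_{u-1}=\gamma$ and $s_v=\eta$.
\begin{itemize}
\item If the interval has a fractional period, let $t$ be that period.
\item Otherwise let $t$ be any period of the interval, say $t=u$. As remarked earlier, designated periods need not be fractional.
\end{itemize}
Every other period of the interval produces at a breakpoint. Let $\nvec$ and $\Nvec$ be the count vectors before and after $t$. These vectors lie in $\mathcal N_{t-u}$ and $\mathcal N_{v-t}$.

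The actual prefix costs at least $A_i$ and the actual suffix costs at least $R^\eta_{t+1,v}(\Nvec)$. The actual continuation costs at least $\Psi_{v+1,\eta}$. Since the schedule is feasible, all of these are finite, so the row and column are admitted and the entry is defined. By Step 1, the entry is at most the schedule's cost on $u,\ldots,T$ whenever the continuation is optimal.

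\emph{Main obstacle.} The main difficulty is the bookkeeping at the junctions. We must check three things:
\begin{itemize}
\item holding at $t-1$ is counted in $A_i$ and not again;
\item holding at $t$ is counted only in $C_j$;
\item the incoming stock of $R$, namely $a_j$, is checked exactly once, through $h_t$.
\end{itemize}
We must also handle the degenerate cases $t=u$ and $t=v$, where one table is the empty block with value $0$ exactly at the zero count. We will verify these conventions against the definitions~\eqref{eq:left} and~\eqref{eq:right}. The argument never requires internal stocks to avoid $\mathcal H$, so candidates that revisit $\mathcal H$ internally do no harm.
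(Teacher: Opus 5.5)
Your proposal is correct and follows the same route as the paper. Balance at the designated period fixes $x_t=a_j-\beta_i$; the two tables, $h_t(a_j)$, and $\Psi_{v+1,\eta}$ then charge each cost exactly once at fixed junction stocks; and designating the fractional period (or any period) yields an admitted row and column with a defined entry, all spelled out in more detail than the paper.
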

\begin{proof}
Equation~\eqref{eq:bridge} is balance at period $t$. The left table ends with stock $\beta_i+d_t$ and the right table requires incoming stock $a_j$. The two tables enforce all other interval constraints; $h_t(a_j)$ enforces stock feasibility immediately after the designated production. The ending level $\eta$ agrees with the continuation's incoming level. Each production and stock cost is charged exactly once. In particular, $h_t(a_j)$ is a column cost, not a nonlinear row--column term.

For a structured boundary interval, designate its fractional period, or any period if all productions are breakpoints. Its endpoint levels and counts on either side give the required entry. The balance-determined designated quantity is evaluated by the original nonlinear production oracle; no interpolation or linear approximation is used.
\end{proof}
Define $\Phi_{u,\gamma,t}$ as the minimum of~\eqref{eq:matrix} over all admitted rows labelled $(u,\gamma,\nvec)$ and all defined columns.
\begin{proposition}[Batched Bellman equation]\label{prop:bellman}
For each terminal convention,
\begin{equation}\label{eq:bellman}
 \Psi_{u,\gamma}=\min\left\{Z_{u,\gamma},\ \min_{u\leq t\leq T}\Phi_{u,\gamma,t}\right\}.
\end{equation}
\end{proposition}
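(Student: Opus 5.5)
The plan is to prove the two inequalities in~\eqref{eq:bellman} separately, by induction on $u$ downward from $T+1$. The inductive hypothesis is that $\Psi_{v+1,\eta}$ is the true suffix optimum for every $v\geq u$ and every $\eta\in\mathcal H$. The base case $\Psi_{T+1,\eta}$ holds by definition, since the effective $h_T$ already encodes $E$.

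\emph{Upper bound ($\leq$).} Each candidate on the right-hand side is the cost of a feasible suffix schedule starting at level $\gamma$ in period $u$.
\begin{itemize}
\item For $Z_{u,\gamma}$ this follows from Lemma~\ref{lem:tables}: a finite $L^\gamma_{u,T}(\nvec)$ is a feasible breakpoint-only schedule whose final stock satisfies the terminal domain through $h_T$.
\item For $\Phi_{u,\gamma,t}$, Lemma~\ref{lem:transition} shows that every defined entry with row label $(u,\gamma,\cdot)$ is the cost of a concatenation: a prefix on $u,\ldots,t-1$, the bridge production in $t$, a suffix on $t+1,\ldots,v$ ending at $\eta$, and a continuation. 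By the inductive hypothesis the continuation costs $\Psi_{v+1,\eta}$ and is attained, by Lemma~\ref{lem:attain}.
\end{itemize}
All costs are charged exactly once, so each candidate is at least $\Psi_{u,\gamma}$. Internal visits to $\mathcal H$ inside a block only add feasible candidates and do not break this argument.

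\emph{Lower bound ($\geq$).} Take an optimal suffix from $(u,\gamma)$ with the structure of Lemma~\ref{lem:structure}. This requires a brief argument that the lemma applies to the suffix problem: a suffix starting at a boundary level is itself an instance of the same model, with $s_0$ replaced by $\gamma\in\mathcal H$. Then split into two cases.
\begin{itemize}
\item \emph{Case 1: no boundary time occurs in $u,\ldots,T$.} The schedule is a final tail whose terminal stock is outside $\mathcal H$. By the lemma it is breakpoint-only, so its cost is at least $Z_{u,\gamma}$.
\item \emph{Case 2: there is a first boundary time $v\geq u$, with $s_v=\eta$.} Then $[u,v]$ is a boundary interval containing at most one fractional production. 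Designate that period as $t$, or any period in $[u,v]$ if none is fractional. Let $\nvec$ and $\Nvec$ be the breakpoint counts before and after $t$. Lemma~\ref{lem:transition} supplies the entry $M_t((u,\gamma,\nvec),(v,\eta,\Nvec))$.
\end{itemize}
In Case 2, the lemma's tables are minima over breakpoint schedules with the same counts and endpoints, so this entry is at most the cost of the schedule on $[u,v]$ plus $\Psi_{v+1,\eta}$. The restriction of the schedule to $v+1,\ldots,T$ is feasible from $\eta$, so its cost is at least $\Psi_{v+1,\eta}$. Hence $\Phi_{u,\gamma,t}$ is at most the optimum.

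One subtlety is the final tail in the case $Z$ covers, including the zero-terminal case. There the tail may end at a level in $\mathcal H$, which puts it in Case 2 anyway; this is consistent with the remark after~\eqref{eq:tail}.

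\emph{Main obstacle.} The delicate step is the lower bound's reliance on Lemma~\ref{lem:structure} for suffix problems that begin at an arbitrary $\gamma\in\mathcal H$ rather than at $s_0$. I will address it by noting that the lemma's proof never uses the identity of the starting level beyond its membership in $\mathcal H$: the perturbations leave earlier boundary times fixed. The same check confirms that the column cost $h_t(a_j)$ and the row convention that excludes cost at $u-1$ introduce no double counting. The recursion is well founded because $\Phi_{u,\gamma,t}$ depends on $\Psi_{v+1,\cdot}$ only for $v\geq t\geq u$.
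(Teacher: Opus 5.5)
Your proposal is correct and follows essentially the same argument as the paper. Every finite candidate on the right-hand side of~\eqref{eq:bellman} is the cost of a feasible suffix, so $\Psi_{u,\gamma}$ is at most that side; conversely, a structured suffix optimum is either a breakpoint-only final tail covered by $Z_{u,\gamma}$, or its first boundary interval is represented by an entry of $M_t$ and the remainder can be replaced by $\Psi_{v+1,\eta}$ without increasing cost. Your downward induction on $u$ is harmless but unnecessary, because $\Psi$ is defined as the true suffix optimum, and your remark that Lemma~\ref{lem:structure} applies to suffixes starting at any $\gamma\in\mathcal H$ makes explicit a step the paper leaves implicit.
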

\begin{proof}
Every finite alternative is feasible, so it is no smaller than $\Psi_{u,\gamma}$. Conversely, select a structured suffix optimum from Lemma~\ref{lem:structure}. If it consists entirely of a breakpoint-only final tail, it is represented by $Z_{u,\gamma}$. Otherwise its first boundary interval is represented by Lemma~\ref{lem:transition}; replacing the remaining suffix by its optimum cannot increase the cost. This proves the reverse inequality. The terminal alternative checks the effective $h_T$, including under zero terminal stock.
\end{proof}

\subsection{Concavity and the partial Monge structure}\label{subsec:monge}
Order rows by nondecreasing $\beta_i$ and columns by nonincreasing $a_j$, breaking ties consistently by labels, including their stock boundary levels. For $J\in\mathcal J_t$, retain exactly the entries with $a_j-\beta_i\in J$, obtaining $M_t^J$.

\begin{lemma}[Piecewise Monge property]\label{lem:monge}
If $i<i'$ and $j<j'$ and all four entries are defined in $M_t^J$, then
\begin{equation}\label{eq:monge}
 M_t^J(i,j)+M_t^J(i',j')\leq M_t^J(i,j')+M_t^J(i',j).
\end{equation}
\end{lemma}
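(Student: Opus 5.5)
The additive terms $A_i$ and $C_j$ appear once on each side of \eqref{eq:monge}, so they cancel. The inequality therefore reduces to a statement about the production cost alone:
\[
 p_t(a_j-\beta_i)+p_t(a_{j'}-\beta_{i'})\leq p_t(a_{j'}-\beta_i)+p_t(a_j-\beta_{i'}).
\]
I will prove this with the standard fact that a concave function $f$ on an interval satisfies $f(x)+f(y)\leq f(x')+f(y')$ whenever $x\leq x',y'\leq y$ and $x+y=x'+y'$. In words, moving mass toward the extremes can only lower the sum of a concave function.

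First I set up the four arguments. By the ordering, $\beta_i\leq\beta_{i'}$ and $a_j\geq a_{j'}$. Put $x_{11}=a_j-\beta_i$, $x_{22}=a_{j'}-\beta_{i'}$, $x_{12}=a_{j'}-\beta_i$ and $x_{21}=a_j-\beta_{i'}$. Then $x_{11}+x_{22}=x_{12}+x_{21}$. Moreover $x_{11}$ is the largest of the four and $x_{22}$ the smallest: $x_{11}\geq x_{12}\geq x_{22}$ and $x_{11}\geq x_{21}\geq x_{22}$. So the left-hand side evaluates $p_t$ at the two extreme points, and the right-hand side at two intermediate points with the same sum. Equal keys under tie-breaking only create equalities, which the argument tolerates.

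Next I use the domain. All four entries are defined in $M_t^J$, so all four arguments lie in the single closed interval (or ray, or singleton) $J$. On $J$ the actual function $p_t$ is concave by Assumption~\ref{ass:cost}, including its endpoint values, since the assumption concerns closed domains. The one-dimensional inequality then follows directly. Write $x_{12}=\lambda x_{22}+(1-\lambda)x_{11}$ and $x_{21}=(1-\lambda)x_{22}+\lambda x_{11}$ for some $\lambda\in[0,1]$; this is possible because the sums agree. Apply concavity to each and add. If $x_{11}=x_{22}$, all four points coincide and the inequality is trivial. If $J$ is a singleton, the same degenerate case applies.

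The only delicate point is to use concavity of the actual closed-domain values rather than just an interior formula. This is needed when an argument sits at an endpoint of $J$, for example zero with a setup discontinuity. Restricting to $M_t^J$ is exactly what guarantees a common concave piece. Without that restriction the four arguments could straddle a breakpoint where $p_t$ is not concave, and the inequality could fail. No other obstacle is expected.
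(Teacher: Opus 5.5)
Your proof is correct and follows the paper's argument. The paper likewise cancels the offsets $A_i$ and $C_j$ and writes the four production quantities as $x$, $x+\delta$, $x+\varepsilon$, $x+\delta+\varepsilon$, all lying in the single concave piece $J$; concavity then places the extreme pair on the smaller side, and your convex-combination derivation simply spells out that one-dimensional step.
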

\begin{proof}
Let $x=a_{j'}-\beta_{i'}$, $\delta=\beta_{i'}-\beta_i\geq0$, and $\varepsilon=a_j-a_{j'}\geq0$. The four quantities are $x$, $x+\delta$, $x+\varepsilon$, and $x+\delta+\varepsilon$, all in $J$. Concavity implies
\[
 p_t(x)+p_t(x+\delta+\varepsilon)
 \leq p_t(x+\delta)+p_t(x+\varepsilon).
\]
The row and column offsets cancel, yielding~\eqref{eq:monge}. Zero increments and boundary quantities obey the same inequality.
\end{proof}
Nonlinear holding costs do not create a further row--column interaction: holding outside $t$ is already optimized in the tables, and $h_t(a_j)$ depends only on the column. This remains true when holding--backlog costs are only piecewise concave. Their concavity was needed for the structural decomposition, not for an additional matrix inequality. All starting and ending boundary labels are included in the same ordered batch; no separate Cartesian enumeration of their pairs is needed.

For a domain $J=[\ell,h]$, with $h$ possibly infinite, defined entries in row $i$ satisfy
\begin{equation}\label{eq:window}
 \beta_i+\ell\leq a_j\leq\beta_i+h.
\end{equation}
The columns form an interval $[\lambda_i,\rho_i]$, possibly empty. Both endpoints are nonincreasing down the ordered rows. Singleton domains satisfy the same statement. Reverse-order scans over the rows compute the two boundary sequences in $O(r+c)$ time for an $r\times c$ batch. Empty rows have minimum $+\infty$ and are omitted from the search. Deleting infeasible rows or columns preserves the order and interval property.

\begin{remark}[Undefined entries cannot simply be padded]\label{rem:padding}
For $\beta=(0,2)$, $a=(2,0)$, and $J=\{0\}$, the defined entries form the off-diagonal pattern
\[
 \begin{pmatrix}\text{undefined}&0\\0&\text{undefined}\end{pmatrix}.
\]
Filling the undefined entries with $+\infty$ violates~\eqref{eq:monge} and reverses the row-minimizer order. Consequently, a complete-matrix search on this padded matrix is invalid. We search the partial domain or completely defined submatrices.
\end{remark}

\subsection{Searching a double staircase}\label{subsec:stairs}
We use standard matrix-searching results, not new implementations of these primitives. A \emph{rising staircase} has a defined prefix in each nonempty row, with prefix lengths nonincreasing down the rows. With constant-time entry access, a complete $r\times c$ Monge matrix admits all-row-minimum search in $O(r+c)$ deterministic time by SMAWK \citep{aggarwal1987}. A Monge rising staircase admits $O((r+c)\alpha(r+c+2))$ deterministic search \citep{klawe1990}, or $O(r+c)$ expected Las Vegas search \citep{chan2021}. Minimizing column labels are retained. These rectangular bounds also follow from the square bounds by repeating rows or columns to size $\max\{r,c\}$, preserving the staircase and Monge properties. Consistent column tie-breaking can be viewed as an infinitesimal column perturbation.

The double-staircase decomposition is related to established generalized matrix-searching techniques \citep{aggarwal1990,chan2021}. The following explicit construction supplies the size accounting needed here.
\begin{lemma}[Linear-total-size decomposition]\label{lem:decompose}
Suppose the nonempty row intervals of a partial $r\times c$ Monge matrix are $[\lambda_i,\rho_i]$, with both endpoint sequences nonincreasing. Its entries can be partitioned into rising-staircase Monge matrices with at most $2r$ total row occurrences and $2c$ total column occurrences. The construction takes $O(r+c)$ time.
\end{lemma}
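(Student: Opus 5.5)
The plan is a single greedy scan that cuts the rows into consecutive \emph{blocks}, each sharing a pivot column, and splits every block at its pivot into a right part and a left part. Both parts will be rising staircases after at most a reversal of orders. I first discard the empty rows, which only makes the counts smaller. I write the remaining rows as $1,\ldots,r'$ in their given order.

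\emph{Construction.} Start a block at the first unprocessed row $f$ and set its pivot $c=\lambda_f$. Extend the block over all following rows $i$ with $\rho_i\geq c$. Since $\rho$ is nonincreasing, these rows are consecutive. For every row $i$ in the block we have $\lambda_i\leq\lambda_f=c\leq\rho_i$, so its interval contains the pivot.

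\emph{Right part.} The right part of the block keeps the entries in columns $[c,\rho_i]$. These are prefixes of the common column range $[c,\rho_f]$, and their lengths are nonincreasing down the rows because $\rho$ is. Hence the right part is a rising staircase.

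\emph{Left part.} The left part keeps the entries in columns $[\lambda_i,c-1]$, possibly empty. These are suffixes ending at column $c-1$, and their lengths are nondecreasing down the rows because $\lambda$ is nonincreasing. Reversing both the row order and the column order turns them into prefixes with nonincreasing lengths, that is, a rising staircase. Reversing both orders simultaneously maps a pair $i<i'$, $j<j'$ to a pair with the same orientation. Therefore the inequality~\eqref{eq:monge} is preserved.

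\emph{Monge property of the pieces.} Each piece contains only originally defined entries, in the same relative order up to this double reversal. Every quadruple of entries defined in a piece is therefore a quadruple defined in the original matrix, so the piece is Monge by hypothesis (for us, by Lemma~\ref{lem:monge}).

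\emph{Next block and row count.} The next block starts at the first row $f'$ with $\rho_{f'}<c$, and the scan repeats. Each row lies in exactly one block and occurs in at most its two parts. This gives at most $2r$ row occurrences.

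\emph{Column count.} This is the step I expect to need the most care. Let the blocks have pivots $c_1\geq c_2\geq\cdots$, with first rows $f_k$ and last rows $g_k$.
\begin{itemize}
\item The right part of block $k+1$ uses columns in $[c_{k+1},\rho_{f_{k+1}}]$, and $\rho_{f_{k+1}}<c_k$. So it lies inside $[c_{k+1},c_k-1]$.
\item The left part of block $k$ uses columns in $[\lambda_{g_k},c_k-1]$. Since $g_k<f_{k+1}$ and $\lambda$ is nonincreasing, $\lambda_{g_k}\geq\lambda_{f_{k+1}}=c_{k+1}$. So this left part also lies inside $[c_{k+1},c_k-1]$.
\item For the last block, the left part lies below $c_{\mathrm{last}}$, and the first right part lies at or above $c_1$.
\end{itemize}
Thus the right parts occupy pairwise disjoint column ranges, and so do the left parts. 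Each column therefore occurs in at most one right part and at most one left part, giving at most $2c$ column occurrences.

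\emph{Partition and running time.} Within a block, the two parts split each row's interval at $c$, so every defined entry is assigned to exactly one piece and the pieces form a partition. The scan advances monotonically through the rows. Each block's pieces are described implicitly by its first row, last row, pivot, and the stored endpoint sequences, so the construction takes $O(r+c)$ time. The one point I would double-check is the reversed left part: its entry access must be implemented via index reversal, not by copying. This keeps constant-time access, so the results of \citet{klawe1990} and \citet{chan2021} apply to each piece.
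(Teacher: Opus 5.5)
Your proof is correct and follows the paper's construction: the same greedy grouping of consecutive rows around the pivot $\lambda_f$ of the first unprocessed row, split at the pivot into a right staircase and a doubly reversed left staircase. The differences are cosmetic. You put the pivot column in the right part, so your left parts (not the right parts) may contain empty rows. You also bound the column occurrences by showing that all right parts are pairwise column-disjoint and all left parts are pairwise column-disjoint, a slightly more direct form of the paper's argument that nonadjacent group spans are disjoint.
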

\begin{proof}
Take the first unprocessed row $s$, set $p=\lambda_s$, and let $e$ be the last consecutive row with $\rho_e\geq p$. Every row $s,\ldots,e$ contains $p$. Split this group into the left column span $[\lambda_e,p]$ and the right span $[p+1,\rho_s]$. In the right span, defined entries form prefixes of nonincreasing length. In the left span they form suffixes; reversing both row and column orders turns them into prefixes of nonincreasing length. Reversing both orders preserves the Monge inequality. Delete empty rows in the right part and continue at $e+1$.

Each original row occurs in at most two parts. Let $I_g=[\lambda_{e_g},\rho_{s_g}]$ be group $g$'s full column span. The next group satisfies $\rho_{s_{g+1}}<\lambda_{s_g}$. Applying this twice gives $\rho_{s_{g+2}}<\lambda_{s_{g+1}}\leq\lambda_{e_g}$. Nonadjacent spans are disjoint; each column belongs to at most two spans, and a group's parts have disjoint columns. Thus total row and column occurrences are at most $2r$ and $2c$. Scanning the boundaries and constructing the index maps costs their total size.
\end{proof}

\begin{algorithm}[t]
\caption{Decompose a double-staircase domain}\label{alg:decompose}
\begin{singlespace}
\begin{algorithmic}[1]
\Require Nonempty intervals $[\lambda_i,\rho_i]$, $i=1,\ldots,r$, with nonincreasing boundaries
\State $s\gets1$
\While{$s\leq r$}
 \State $p\gets\lambda_s$; $e\gets s$
 \While{$e<r$ and $\rho_{e+1}\geq p$} \State $e\gets e+1$ \EndWhile
 \State Emit rows $s,\ldots,e$ and columns $\lambda_e,\ldots,p$, reversing both orders
 \If{$p<\rho_s$}
  \State Emit rows $s,\ldots,e$ and columns $p+1,\ldots,\rho_s$, deleting empty rows
 \EndIf
 \State $s\gets e+1$
\EndWhile
\end{algorithmic}
\end{singlespace}
\end{algorithm}

\Needspace{6\baselineskip}
\begin{proposition}[Batch-search cost]\label{prop:search}
Given the ordered keys of an $r_t\times c_t$ batch, all row minima and witnesses take
\[
 O\bigl(q(r_t+c_t)\alpha(r_t+c_t+2)\bigr)
\]
deterministic time, or $O(q(r_t+c_t))$ expected Las Vegas time.
\end{proposition}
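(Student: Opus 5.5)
The plan is to treat each production domain $J\in\mathcal J_t$ separately, prove the bound for one piece, and then sum over the at most $q$ pieces. Fix $J$ and the ordered batch: rows by nondecreasing $\beta_i$, columns by nonincreasing $a_j$. The row minima of the whole batch are obtained as follows. For every row, take the minimum over $J$ of its row minima in the matrices $M_t^J$. Among the attaining entries, keep the consistently tie-broken witness. This combination step costs $O(q r_t)$.

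For a single piece $J$, I would first compute the window endpoints $\lambda_i,\rho_i$ from \eqref{eq:window}. Two reverse-order linear scans over the sorted keys give them in $O(r_t+c_t)$, and empty rows are discarded. The two endpoint sequences are nonincreasing, as observed after Lemma~\ref{lem:monge}. By Lemma~\ref{lem:monge}, any four defined entries of $M_t^J$ satisfy the Monge inequality. Lemma~\ref{lem:decompose} (Algorithm~\ref{alg:decompose}) then partitions the defined entries into rising-staircase pieces. Each piece inherits the Monge inequality, since every one of its entries is a defined entry of $M_t^J$; reversing both orders also preserves it. The pieces have at most $2r_t$ row occurrences and $2c_t$ column occurrences in total, and the decomposition takes $O(r_t+c_t)$ time. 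Entry access inside a piece stays constant time: an index map gives the original row and column, and the entry is $A_i+C_j+p_t(a_j-\beta_i)$, one oracle call plus two additions.

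Next, I would apply the staircase searches of \citet{klawe1990} (deterministic) or \citet{chan2021} (Las Vegas) to each piece of size $r'\times c'$. This costs $O((r'+c')\alpha(r'+c'+2))$ or expected $O(r'+c')$. Because $\alpha$ is monotone, each factor $\alpha(r'+c'+2)$ is at most $\alpha(r_t+c_t+2)$. Summing over the pieces gives $O((r_t+c_t)\alpha(r_t+c_t+2))$, and linearity of expectation handles the randomized case. Each original row lies in at most two pieces, so its row minimum over $M_t^J$ is the smaller of at most two piece minima, each with its witness. Since the pieces partition the defined entries of $M_t^J$, this is exactly the minimum over the defined entries. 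Finally, summing over the at most $q$ domains gives the stated factor $q$.

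The main obstacle I expect is bookkeeping rather than mathematics. First, pieces produced in reversed order need their witnesses mapped back to the original column labels. Second, tie-breaking must stay consistent across pieces and domains, so that witnesses are well defined; here I would use the infinitesimal column perturbation already mentioned. Third, degenerate pieces must be handled: empty rows in the right span, singleton domains, and repeated keys with equal $\beta$ or $a$. For repeated keys I would check that the window intervals remain contiguous under the label tie-break, since equal keys give identical window membership. One further caveat needs a remark: the randomized bound is an expectation over all pieces and is stated per batch, so independence between pieces is not needed.
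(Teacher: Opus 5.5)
Your proposal is correct and follows the paper's proof essentially step for step: for each domain $J\in\mathcal J_t$ you scan the window boundaries, drop empty rows, apply Lemma~\ref{lem:decompose}, run the Klawe--Kleitman or Chan staircase searches on pieces of total size $O(r_t+c_t)$, and take each row's minimum over the at most $q$ domains. The paper additionally states two points that you leave implicit: the union of the domains in $\mathcal J_t$ covers every feasible designated quantity, and an entry appearing in two domains (a shared endpoint) has the same actual cost in each, so the duplication is harmless. Since you evaluate entries with the actual oracle $p_t$, your combination step already handles both.
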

\begin{proof}
For each supplied domain, scan the boundaries, delete empty rows, apply Lemma~\ref{lem:decompose}, and search the resulting staircases. Their total dimensions are $O(r_t+c_t)$. Minimize across all domains for each original row. Every feasible quantity is included, and a shared endpoint always has the same actual cost, so its possible duplication is harmless.
\end{proof}

\subsection{Reusing sorted state orders}\label{subsec:orders}
Sorting every full batch independently would introduce an unnecessary logarithmic factor. We reuse orders on combinatorial state sets rather than on feasibility-filtered sets, which need not be nested. Define
\begin{align*}
 \mathcal R_t&=\{(u,\gamma,\nvec):1\leq u\leq t,\ \gamma\in\mathcal H,\ \nvec\in\mathcal N_{t-u}\},\\
 \mathcal C_t&=\{(v,\eta,\Nvec):t\leq v\leq T,\ \eta\in\mathcal H,\ \Nvec\in\mathcal N_{v-t}\}.
\end{align*}
The hockey-stick identity gives
\begin{equation}\label{eq:batchsizes}
 \bar r_t=|\mathcal R_t|=K\binom{t+m}{m+1},\qquad
 \bar c_t=|\mathcal C_t|=K\binom{T-t+m+1}{m+1},
\end{equation}
and hence
\begin{equation}\label{eq:totalsize}
 \sum_{t=1}^{T}(\bar r_t+\bar c_t)=2K\binom{T+m+1}{m+2}=\OmK(T^{m+2}).
\end{equation}

\begin{lemma}[Merged orders]\label{lem:orders}
All row and column orders, and their feasibility-filtered versions, can be constructed in $\OmK(T^{m+2})$ total time and space.
\end{lemma}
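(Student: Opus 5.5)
The plan is to build the orders incrementally in $t$ by merging, never sorting a full batch from scratch. The main observation concerns how the row keys change with $t$. For a fixed state $(u,\gamma,\nvec)$, the key $\beta_i=\gamma+w(\nvec)-D(u,t)$ depends on $t$ only through the common shift $-Q_t$. Hence the relative order of two row states that lie in both $\mathcal R_t$ and $\mathcal R_{t+1}$ is the same in both batches.

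The set $\mathcal R_{t+1}$ is $\mathcal R_t$ with every count set enlarged: $\mathcal N_{t-u}\subseteq\mathcal N_{t+1-u}$ for each $u\leq t$, and the new rows $u=t+1$ have $\nvec=\boldsymbol 0$. The new states are therefore those with $|\nvec|=t+1-u$ exactly, together with the $K$ states $(t+1,\gamma,\boldsymbol 0)$. Their number is $\bar r_{t+1}-\bar r_t=K\binom{t+m}{m}$. I would sort these new states and merge them into the stored order of $\mathcal R_t$ in $O(\bar r_{t+1})$ time. Summed over $t$, the merges cost $\sum_t\bar r_t=\OmK(T^{m+2})$ by~\eqref{eq:totalsize}.

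The columns are handled the same way, sweeping $t$ downward. The key $a_j=\eta+Q_v-Q_t-w(\Nvec)$ again shifts uniformly with $t$. Also $\mathcal C_t\supseteq\mathcal C_{t+1}$, with new states $|\Nvec|=v-t$ and $(t,\eta,\boldsymbol 0)$. Ties are broken by a fixed lexicographic label, which is independent of $t$, so merging stays consistent with the tie rule used in Section~\ref{subsec:monge}.

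The main obstacle is sorting the new states at each step without incurring a logarithmic factor. The new row states at step $t$ number $\Theta(K t^m)$, and sorting them generically would cost an extra $\log T$. I would instead reduce this sort to earlier orders. A new state $(u,\gamma,\nvec)$ with $|\nvec|=t+1-u\geq1$ has some component $n_\ell\geq1$, and $\nvec-\evec_\ell$ lies in a previously ordered layer, with key shifted by the constant $b_\ell$.

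Concretely, I would partition the new layer by the largest index $\ell$ with $n_\ell>0$. Each class is then in bijection with a layer of smaller total count, so its order is inherited from the already-sorted earlier layer, shifted by $b_\ell$. The at most $m+1$ classes, including the $K$ singleton states, are then merged in $O(m\cdot\text{size})$ time. For this to work, I would maintain, for each fixed $u$, the per-layer sorted lists of $\{\gamma+w(\nvec):|\nvec|=k\}$. These lists are independent of $t$ apart from the index shift, so each layer is built once by this $(m+1)$-way merge. The total work is therefore linear in the number of (layer, $u$, $\gamma$) states, which is $\OmK(T^{m+1})$.

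Merging all the layer lists over $u$ into the global order of $\mathcal R_{t+1}$ is then the step-by-step merge described above. If a genuine layer-merge bound fails, the fallback is to note that each new layer's states for distinct $u$ can be merged with a $K(t+1)$-way heap. That costs $O(\log T)$ per new state, but there are only $\OmK(T^{m+1})$ new states in total, so the cost is $\OmK(T^{m+1}\log T)=\OmK(T^{m+2})$, which is still within budget. I would adopt this simpler argument.

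Finally, the feasibility-filtered orders are obtained by one linear pass over each stored full order. That pass deletes the states with infinite $A_i$ or $C_j$, and deletion preserves order. The cost is $O(\bar r_t+\bar c_t)$ per batch, hence $\OmK(T^{m+2})$ in total by~\eqref{eq:totalsize}. Storage of the current orders plus the filtered copies is of the same size. This gives the stated time and space bounds, using Lemma~\ref{lem:tables} for the availability of the $A_i$ and $R$ values, and the sweep order of Section~\ref{subsec:sweep} for the $\Psi$ values.
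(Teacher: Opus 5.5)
Your proof is correct and is essentially the paper's argument. Both use the fact that the keys are invariant in $t$ up to the common shift $-Q_t$, sort only the $K\binom{t+m}{m}$ new states at each step and merge them into the previous order (columns by a downward sweep), and filter with one order-preserving scan per batch. The $\OmK(T^{m+1}\log(T+2))$ sorting cost is then absorbed into the $\OmK(T^{m+2})$ merging and storage cost.

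The layer-inheritance scheme for avoiding the logarithm is therefore unnecessary. As stated it is also slightly inaccurate: the class with largest nonzero index $\ell$ corresponds to counts of total $k-1$ supported on $\{1,\ldots,\ell\}$, not to a whole earlier layer. You were right to fall back on the simpler bound.
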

\begin{proof}
Write the keys in terms of invariant parts:
\begin{align}
 \beta_t(u,\gamma,\nvec)
 &=\underbrace{\gamma+w(\nvec)+Q_{u-1}}_{K^L(u,\gamma,\nvec)}-Q_t,\notag\\
 a_t(v,\eta,\Nvec)
 &=\underbrace{\eta+Q_v-w(\Nvec)}_{K^R(v,\eta,\Nvec)}-Q_t.\label{eq:invariantkeys}
\end{align}
The common translation does not affect a batch order. For rows, process $t=1,\ldots,T$. States from $\mathcal R_{t-1}$ persist with the same invariant keys. New states have $|\nvec|=t-u$ and total number $K\binom{t+m-1}{m}=\OmK(T^m)$. Sort only these states and merge them into the previous order. New-state sorting sums to $\OmK(T^{m+1}\log(T+2))$, while merging and storage cost $\OmK(T^{m+2})$ by~\eqref{eq:totalsize}. For fixed $m,K$, the latter bound absorbs the former.

For columns, process $t=T,\ldots,1$, adding states with $|\Nvec|=v-t$ and merging in nonincreasing $K^R$ order. The same count and bound apply. A scan per batch removes infinite table, stock-cost, and continuation values. Deletion preserves orders and interval domains. The total filtering cost is bounded by~\eqref{eq:totalsize}.
\end{proof}
Order reuse is an ingredient of the piecewise-linear method of \citet{papadopoulos2026}. Here the invariant universes include all boundary labels and support the nonlinear reduction. No integrality, distinct breakpoint sums, or presorted-input assumption is used.

\subsection{Backward evaluation, correctness, and complexity}\label{subsec:sweep}
Initialize $\mathrm{best}[u,\gamma]=Z_{u,\gamma}$ for each suffix start and boundary level. Process designated periods from $T$ down to $1$. In batch $t$, every column has continuation index $v+1\geq t+1$, so its required value at every boundary level has already been finalized. After updating all $u\leq t$, finalize $\Psi_{t,\gamma}=\mathrm{best}[t,\gamma]$ for every $\gamma\in\mathcal H$.

\begin{algorithm}[t]
\caption{Exact boundary-level Monge lot sizing}\label{alg:main}
\begin{singlespace}
\begin{algorithmic}[1]
\Require An instance satisfying Assumptions~\ref{ass:cost}--\ref{ass:stock}, terminal set $E$, initial stock $s_0\in\mathcal H$, and a batch-search routine
\Ensure Optimal value and witnesses, or $+\infty$ if infeasible
\State Compute all $L^\gamma,R^\eta$ and their choices by~\eqref{eq:left}--\eqref{eq:right}
\State Construct the combinatorial row and column orders of Lemma~\ref{lem:orders}
\State Set $\Psi_{T+1,\eta}\gets0$ for $\eta\in\mathcal H\cap E$, and $+\infty$ otherwise
\State Set $\mathrm{best}[u,\gamma]\gets Z_{u,\gamma}$ for all $u\leq T$ and $\gamma\in\mathcal H$
\For{$t=T,T-1,\ldots,1$}
 \State Filter the orders; form admitted rows and columns from~\eqref{eq:row}--\eqref{eq:col}
 \State Set $\mu_i\gets+\infty$ for every admitted row
 \For{each $J\in\mathcal J_t$}
  \State Obtain nonempty row intervals of $M_t^J$ by monotone scans
  \State Search their minima using Algorithm~\ref{alg:decompose} and staircase searching
  \State Update $\mu_i$ and its witness for each original row
 \EndFor
 \For{each admitted row $i=(u,\gamma,\nvec)$}
  \State $\mathrm{best}[u,\gamma]\gets\min\{\mathrm{best}[u,\gamma],\mu_i\}$; retain a witness
 \EndFor
 \State Set $\Psi_{t,\gamma}\gets\mathrm{best}[t,\gamma]$ for every $\gamma\in\mathcal H$
\EndFor
\State \Return $\Psi_{1,s_0}$ and its recursively stored witnesses
\end{algorithmic}
\end{singlespace}
\end{algorithm}
An empty row or column set contributes no updates; initialization and finalization still proceed. No infinite matrix is passed to a search primitive.

\begin{theorem}[Correctness]\label{thm:correct}
Algorithm~\ref{alg:main} returns the optimum under each of the terminal conventions in~\eqref{eq:terminalsets}, or $+\infty$ exactly when the instance is infeasible. Its witnesses recover an optimal production plan whenever the value is finite.
\end{theorem}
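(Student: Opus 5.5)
The plan is a backward induction on the suffix start $u$. The invariant is that, when Algorithm~\ref{alg:main} finalizes $\Psi_{t,\gamma}$ at the end of iteration $t$, this value equals the true suffix optimum defined in Section~\ref{subsec:batch}, for every $\gamma\in\mathcal H$. The base case $\Psi_{T+1,\eta}$ holds by initialization: an empty suffix costs zero exactly when the incoming level lies in $E$.

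For the inductive step at iteration $t$, the columns of batch $t$ use only $\Psi_{v+1,\eta}$ with $v+1\geq t+1$. By the induction hypothesis these are already correct, so every column cost $C_j$ in~\eqref{eq:col} is the quantity assumed in Lemma~\ref{lem:transition}. Lemma~\ref{lem:tables} guarantees that the tables $L^\gamma$, $R^\eta$ and the tail value $Z_{u,\gamma}$ are correct. Lemma~\ref{lem:orders} guarantees that the filtered orders are exactly the admitted rows and columns sorted as required in Section~\ref{subsec:monge}.

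Next I would show that, for every admitted row $i$ of batch $t$, the computed $\mu_i$ equals the minimum of $M_t(i,j)$ over all defined columns $j$. The defined entries are the union over $J\in\mathcal J_t$ of the entries of $M_t^J$. Each $M_t^J$ has interval row domains with nonincreasing endpoints by~\eqref{eq:window} and is partially Monge by Lemma~\ref{lem:monge}. Lemma~\ref{lem:decompose} partitions its entries into rising Monge staircases, so the cited search primitives return exact row minima. Proposition~\ref{prop:search} then yields the exact minimum per row. Duplicated shared endpoints carry equal actual costs, so they cause no error. Remark~\ref{rem:padding} explains why no padding is used.

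It follows that, after all batches $t'\geq t$, $\mathrm{best}[t,\gamma]$ equals
\[
 \min\Bigl\{Z_{t,\gamma},\ \min_{t\leq t'\leq T}\Phi_{t,\gamma,t'}\Bigr\}.
\]
This holds because rows labelled with start $u=t$ occur only in batches $t'\geq t$, and each such row contributes its minimum. Proposition~\ref{prop:bellman} identifies this value with $\Psi_{t,\gamma}$, which closes the induction. The main care needed is this bookkeeping point: no update to $\mathrm{best}[t,\cdot]$ can arrive after finalization, since later iterations $t'<t$ have rows with $u\leq t'<t$ only.

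At $u=1$ with $\gamma=s_0\in\mathcal H$, the returned value is the optimum of the original problem, for each terminal set, because $h_T$ has been intersected with $E$. Infeasibility follows directly. Every finite alternative arises from a feasible schedule by Lemma~\ref{lem:transition} and by the construction of $Z$. Conversely, if the instance is feasible, Lemma~\ref{lem:attain} gives a finite optimum, which is then represented by the argument above. Hence the algorithm returns $+\infty$ exactly when the instance is infeasible.

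Witness recovery uses the stored choices. These are the prefix table choices of Lemma~\ref{lem:tables}, the designated quantity from~\eqref{eq:bridge}, the suffix table choices, and recursively the witness of $\Psi_{v+1,\eta}$, or the witness of the tail $Z$. Concatenating them gives a feasible plan whose cost equals the value charged, with each cost counted exactly once as in Lemma~\ref{lem:transition}. The recursion terminates because $v+1>u$.
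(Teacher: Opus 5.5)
Your proof is correct and takes essentially the same route as the paper. Both use backward induction with finalized continuations, exact batch row minima from Proposition~\ref{prop:search}, identification of $\mathrm{best}[t,\gamma]$ with $\Psi_{t,\gamma}$ via Proposition~\ref{prop:bellman} once no later batch can update start $t$, and witness reconstruction from stored table choices. The only difference is bookkeeping: the paper states the invariant on $\mathrm{best}[u,\gamma]$ for all $u\le t$ at once, while you track $u=t$ and argue that no late updates arrive, which is equivalent.
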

\begin{proof}
Use backward induction. At the start of iteration $t$, all $\Psi_{k,\eta}$ for $k\geq t+1$ and $\eta\in\mathcal H$ are final and correct. Previously processed batches had designated periods $q>t$ and used finalized continuations. For every $u\leq t$ and $\gamma\in\mathcal H$, the invariant is
\begin{equation}\label{eq:invariant}
 \mathrm{best}[u,\gamma]=\min\left\{Z_{u,\gamma},\ \min_{t<q\leq T}\Phi_{u,\gamma,q}\right\}.
\end{equation}
It holds initially with an empty inner set. All continuations required at $t$ are known. Proposition~\ref{prop:search} returns the exact row minima, inserting precisely $\Phi_{u,\gamma,t}$ into each incumbent. At $u=t$, the resulting value is $\Psi_{t,\gamma}$ by Proposition~\ref{prop:bellman}; a block beginning at $t$ cannot designate an earlier period. This proves finalization for every boundary label and the next invariant.

Every finite update has a feasible witness. Conversely, the structural lemma and Bellman equation retain a finite alternative for every feasible suffix. A witness identifies a terminal tail or $(u,\gamma,t,v,\eta,\nvec,\Nvec)$ with continuation at $(v+1,\eta)$. Recover the prefix and suffix from their table choices and set the designated production by~\eqref{eq:bridge}. Successive blocks are disjoint. Reconstruction takes $O(T)$ table-choice steps, or $O(mT)$ coordinate operations, and recovers all stock levels by balance.
\end{proof}

\begin{theorem}[Main complexity bounds]\label{thm:complexity}
Fix $m\geq1$ and $K\geq1$. Under Assumptions~\ref{ass:cost}--\ref{ass:stock} and the exact arithmetic/cost-oracle model, every terminal variant in~\eqref{eq:terminalsets} can be solved in
\begin{align}
 &\OmK\bigl(T^{m+2}\alpha(T+2)\bigr)&&\text{deterministic time},\notag\\
 &\OmK(T^{m+2})&&\text{expected Las Vegas time}.\label{eq:complexity}
\end{align}
Both algorithms have $\OmK(T^{m+2})$ worst-case space. The randomized algorithm is always correct; only its running time is random.
\end{theorem}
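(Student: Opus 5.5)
The plan is to charge the cost of Algorithm~\ref{alg:main} phase by phase and sum the charges. Correctness for every terminal convention is already given by Theorem~\ref{thm:correct}, so only time and space need to be bounded.

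\textbf{Preprocessing.} The tables $L^\gamma$ and $R^\eta$ and their witnesses cost $\OmK(T^{m+2})$ time and space by Lemma~\ref{lem:tables}. Each terminal value $Z_{u,\gamma}$ in~\eqref{eq:tail} is a minimum over $\mathcal N_{T-u+1}$. Over all $u$ and $\gamma$ this is $K\sum_u O(T^m)=\OmK(T^{m+1})$ work. The combinatorial orders and their per-batch filtered versions cost $\OmK(T^{m+2})$ by Lemma~\ref{lem:orders}.

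\textbf{Batch preparation.} In batch $t$ the admitted rows and columns are subsets of $\mathcal R_t$ and $\mathcal C_t$. Hence $r_t\le\bar r_t$ and $c_t\le\bar c_t$. Each key $\beta_i$, $a_j$ and each offset $A_i$, $C_j$ from~\eqref{eq:row}--\eqref{eq:col} is obtained in $O(1)$ table lookups, since weights are stored. The continuation $\Psi_{v+1,\eta}$ is already final by the backward order of Section~\ref{subsec:sweep}. Each entry $M_t(i,j)$ is therefore an $O(1)$ oracle query, as the search primitives require. For each $J\in\mathcal J_t$, the monotone scans giving $[\lambda_i,\rho_i]$ cost $O(r_t+c_t)$.

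\textbf{Searching and summation.} By Proposition~\ref{prop:search}, with $q\le 2m+2$, batch $t$ costs $O(q(r_t+c_t)\alpha(r_t+c_t+2))$ deterministically. The sizes satisfy $r_t+c_t\le\bar r_t+\bar c_t=\OmK(T^{m+1})$. Because $\alpha$ grows so slowly, $\alpha(cT^{m+1}+2)\le\alpha(T+2)+O_m(1)$ for fixed $m,K$. I expect this comparison to be the one delicate point. If the paper's convention for $\alpha$ makes the additive shift awkward, I would instead use the Klawe--Kleitman bound in the form where the argument is the number of rows, relating it to $T$ by monotonicity and a constant-factor argument. Summing over $t$ with~\eqref{eq:totalsize} gives $\OmK(T^{m+2}\alpha(T+2))$. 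The randomized bound is $\sum_t O(q(r_t+c_t))=\OmK(T^{m+2})$ in expectation, by linearity of expectation over the independent batch searches.

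\textbf{Updates, space, and the Las Vegas property.} The row-minimum updates of $\mathrm{best}$ cost $O(r_t)$ per batch. Finalization costs $O(K)$ per batch. Reconstruction costs $O(mT)$. Space is dominated by the tables and the stored orders, which is $\OmK(T^{m+2})$. A single batch and its staircase pieces need only $O(r_t+c_t)$ working space, because the decomposition of Lemma~\ref{lem:decompose} has linear total size, and this space is reused across batches. The bound is worst case because the randomized primitive's space is linear in the instance size. Finally, the randomized search of \citet{chan2021} always returns exact row minima and only its running time is random. By Theorem~\ref{thm:correct}, the randomized algorithm is therefore always correct.
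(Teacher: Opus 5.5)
Your time accounting matches the paper's proof. Tables, tails, and orders cost $\OmK(T^{m+2})$. Proposition~\ref{prop:search} is summed over~\eqref{eq:totalsize}, and a polynomial argument of $\alpha$ costs only an additive constant. The randomized bound then follows by summing per-batch expectations. Independence of the batch searches is not what is needed there; it suffices that each call meets its expected bound conditionally on the earlier calls.

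The gap is the worst-case space claim for the randomized algorithm. You justify it by asserting that the randomized primitive uses space linear in its instance size. However, Chan's staircase search is imported only as an $O(r+c)$ \emph{expected}-time Las Vegas routine, and that guarantee says nothing about its memory in the worst case. On unlucky random choices such a routine may run arbitrarily long and touch arbitrarily many words. The theorem explicitly promises $\OmK(T^{m+2})$ \emph{worst-case} space, so this step needs an argument.

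The missing idea is to clock each search. Suppose the routine's expected cost on every input is at most $c_0(r+c)$. Abort any trial that exceeds $2c_0(r+c)$ operations and restart it with fresh random words. By Markov's inequality, a trial aborts with probability at most $1/2$. Hence the expected number of trials is at most $2$ and the expected time remains $O(r+c)$. Every trial touches only $O(r+c)$ words, and only completed outputs are returned, so correctness and the Las Vegas property are preserved. This is how the paper obtains worst-case space.

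A smaller point of the same kind affects your deterministic workspace bound. The linear total size in Lemma~\ref{lem:decompose} bounds the dimensions of the staircase instances, not the internal workspace of the Klawe--Kleitman search. It suffices, as in the paper, to bound a deterministic batch's workspace by its operation count, $\OmK(T^{m+1}\alpha(T+2))=\OmK(T^{m+2})$. The stored tables and orders already use that much space.
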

\begin{proof}
Tables and orders each take $\OmK(T^{m+2})$ time and space. There are $q=O(m)$ production domains per period. Sum Proposition~\ref{prop:search} over~\eqref{eq:totalsize}. A batch has $\OmK(T^{m+1})$ rows plus columns. Replacing the argument of $\alpha$ by a fixed-degree polynomial changes it by at most a constant depending on that degree and the fixed parameters, giving the deterministic bound. Conditional on previous calls, each randomized search has its expected linear bound. Summing the expectations gives $\OmK(T^{m+2})$. Tail initialization, incumbent updates, and reconstruction do not exceed preprocessing.

Store all tables and orders and process search instances sequentially. A single deterministic batch uses at most its operation count in workspace, namely $\OmK(T^{m+1}\alpha(T+2))=\OmK(T^{m+2})$. For a randomized search with expected cost at most $c_0(r+c)$, abort and restart any trial exceeding $2c_0(r+c)$ operations. Markov's inequality bounds the probability of abort by $1/2$. Independent trials retain expected linear time, use only $O(r+c)$ words, and return only correct completed outputs. Here $c_0$ is a search-routine constant, not the stock-boundary count. This proves worst-case space for the full algorithm.
\end{proof}
For a fixed batch, continuation offsets do not change during search. The searches are therefore offline; no online matrix data structure is required. The added boundary levels occur as row or column labels in~\eqref{eq:totalsize}, not as a separately enumerated sequence of stock regimes.

\subsection{An implementable rectangle-search variant}\label{subsec:rectangle}
The strongest bounds above use imported staircase primitives. Our implemented variant instead uses complete Monge matrices only. Build a balanced binary segment tree on the $c$ columns. Decompose each row interval into its $O(\log(c+2))$ disjoint canonical node intervals. At each node, collect assigned rows in their original order. The Cartesian product of these rows and the node's columns is completely defined and Monge. Search it by SMAWK, then minimize each row's answers over its assigned nodes.

\begin{algorithm}[t]
\caption{Canonical rectangles for one production domain}\label{alg:rectangle}
\begin{singlespace}
\begin{algorithmic}[1]
\Require Ordered matrix-entry oracle and nonempty intervals $[\lambda_i,\rho_i]$
\State Build a balanced segment tree over the columns
\For{each row $i$ in its original order}
 \State Append $i$ to each canonical node in the disjoint cover of $[\lambda_i,\rho_i]$
\EndFor
\State Initialize all row minima to $+\infty$
\For{each node with assigned rows}
 \State Search its complete row--column rectangle with SMAWK
 \State Update the minima and witnesses of its assigned rows
\EndFor
\State \Return the row minima and witnesses
\end{algorithmic}
\end{singlespace}
\end{algorithm}

\begin{lemma}[Rectangle-search bound]\label{lem:rectangle}
Algorithm~\ref{alg:rectangle} finds all row minima in $O((r+c)\log(c+2))$ deterministic time.
\end{lemma}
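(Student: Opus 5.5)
The argument has two parts: correctness of the canonical rectangles, and an accounting of the SMAWK costs across the tree.

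\emph{Correctness.} A balanced segment tree on $c$ columns represents every interval $[\lambda_i,\rho_i]$ as a disjoint union of $O(\log(c+2))$ canonical nodes, at most two per level. Hence every defined entry of row $i$ lies in exactly one node assigned to $i$, and no undefined entry lies in any such node. For a node $N$ with column interval $[x,y]$ and assigned rows $S_N$, every pair $(i,j)$ with $i\in S_N$ and $j\in[x,y]$ is defined. The rectangle is therefore completely defined. Keeping rows in their original order and columns in theirs makes it a submatrix of $M_t^J$ with inherited order, so Lemma~\ref{lem:monge} applies to every $2\times2$ minor and the rectangle is Monge. SMAWK \citep{aggarwal1987} then returns exact row minima on each rectangle. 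Taking the minimum over a row's nodes gives its exact minimum over the defined entries, with witnesses carried along.

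\emph{Complexity.} Building the tree takes $O(c)$ time. Computing the canonical cover of one row takes $O(\log(c+2))$ time, so covers take $O(r\log(c+2))$ in total. Because rows are appended in their original order, each node's row list is already sorted and no sorting is needed. SMAWK on node $N$ costs $O(|S_N|+|N|)$, where $|N|$ is the node's column count. Summing over nodes gives
\[
 \sum_N|S_N|\le r\cdot O(\log(c+2)),
\]
since each row has $O(\log(c+2))$ assignments. For the columns, sum $|N|$ only over nodes with assigned rows. Each tree level partitions the columns, so every level contributes at most $c$, and there are $O(\log(c+2))$ levels. This bound is the only nonobvious step. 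Charging columns to nodes would be wasteful if many tiny nodes were searched, but the per-level partition argument disposes of that concern. The final minimization costs one comparison per assignment. The total is therefore $O((r+c)\log(c+2))$.

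The main point needing care is that SMAWK's linear bound applies to each rectangle as an $|S_N|\times|N|$ matrix with constant-time oracle access. That access comes from index maps into the ordered row and column lists, which are built in time proportional to the node sizes already counted above.
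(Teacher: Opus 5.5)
Your proposal is correct and follows essentially the same argument as the paper. Both count $O(r\log(c+2))$ row occurrences from the canonical covers and bound column occurrences by $O(c\log(c+2))$ because node spans at each depth are disjoint; both then sum the linear SMAWK costs and get correctness because the searched rectangles are completely defined, Monge, and exactly cover the defined entries. Your extra accounting (tree construction, computing the covers, index maps, and the final per-row minimization) only makes explicit details the paper leaves implicit.
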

\begin{proof}
There are $O(r\log(c+2))$ row occurrences. At each tree depth, node column spans are disjoint and contain at most $c$ columns in total. Thus there are $O(c\log(c+2))$ column occurrences across all searched rectangles. Sum the linear SMAWK costs. Every defined entry is covered, and every searched entry is defined, proving correctness.
\end{proof}

\begin{corollary}[Implemented complexity]\label{cor:implemented}
Replacing the batch searches in Algorithm~\ref{alg:main} by Algorithm~\ref{alg:rectangle} gives a deterministic $\OmK(T^{m+2}\log(T+2))$ algorithm with $\OmK(T^{m+2})$ space.
\end{corollary}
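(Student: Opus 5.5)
The corollary follows by substituting Lemma~\ref{lem:rectangle} for Proposition~\ref{prop:search} in the proof of Theorem~\ref{thm:complexity}; all other components of Algorithm~\ref{alg:main} stay unchanged.

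\emph{Correctness.} Theorem~\ref{thm:correct} uses only one property of the search routine: it returns the exact row minima of every $M_t^J$ over its defined entries, with witnesses. Lemma~\ref{lem:rectangle} supplies exactly this for each domain $J\in\mathcal J_t$. The canonical rectangles cover every defined entry, and each one is a complete Cartesian product of defined entries. By Lemma~\ref{lem:monge}, these products are Monge in the inherited row and column orders, so SMAWK applies without padding; this avoids the failure shown in Remark~\ref{rem:padding}. Minimizing over the $q$ domains then gives $\Phi_{u,\gamma,t}$ exactly as before, and the backward induction carries over verbatim.

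\emph{Time.} Tables, merged orders, filtering, tail initialization, incumbent updates and reconstruction cost $\OmK(T^{m+2})$ by Lemmas~\ref{lem:tables} and~\ref{lem:orders}. For each batch $t$, the $O(r_t+c_t)$ boundary scans are repeated for $q=O(m)$ domains. Lemma~\ref{lem:rectangle} then costs $O(q(r_t+c_t)\log(c_t+2))$. By~\eqref{eq:batchsizes}, $c_t\le \bar c_t=\OmK(T^{m+1})$, so $\log(c_t+2)\le (m+1)\log(T+2)+O_{m,K}(1)=\OmK(\log(T+2))$. Summing over $t$ with~\eqref{eq:totalsize} gives $\OmK(T^{m+2}\log(T+2))$, which absorbs the preprocessing cost.

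\emph{Space.} This is the step needing the most care, since the segment-tree row occurrences number $O(r_t\log(c_t+2))$ and would exceed $\OmK(T^{m+2})$ if stored for all $t$ at once. The plan is to process batches sequentially, as in the proof of Theorem~\ref{thm:complexity}, and discard each batch's tree, node lists and SMAWK workspace after use. One batch then uses $\OmK(T^{m+1}\log(T+2))$ words, which is $\OmK(T^{m+2})$ because $\log(T+2)=O(T)$. Only the permanent tables, orders, $\mathrm{best}$ and $\Psi$ values and witnesses persist, and these fit in $\OmK(T^{m+2})$ space by the earlier lemmas. Entries are accessed implicitly through~\eqref{eq:matrix} in constant time, so no matrix is materialized.
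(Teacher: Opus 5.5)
Your proposal is correct and matches the paper's proof: you sum Lemma~\ref{lem:rectangle} over periods and domains via~\eqref{eq:totalsize}, absorb $\log(c_t+2)$ into $\OmK(\log(T+2))$, and discard each period's row-assignment lists, of size $\OmK(T^{m+1}\log(T+2))=\OmK(T^{m+2})$, before the next batch. Your explicit correctness check (each canonical rectangle is completely defined and Monge by Lemma~\ref{lem:monge}, so no padding is needed) is a useful addition that the paper leaves implicit in Lemma~\ref{lem:rectangle} and Theorem~\ref{thm:correct}.
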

\begin{proof}
Sum Lemma~\ref{lem:rectangle} over all periods and domains using~\eqref{eq:totalsize}. At one period, rectangle row-assignment lists occupy at most $\OmK(T^{m+1}\log(T+2))=\OmK(T^{m+2})$ space and can be discarded afterwards.
\end{proof}

\section{Computational experiments}\label{sec:experiments}
The experiments ask whether the implemented reduction returns the same objectives as the predecessor and whether the reduced transition work yields practical gains on the tested instances. They evaluate the rectangle/SMAWK variant of Corollary~\ref{cor:implemented}, denoted R-SMAWK, against a fresh implementation of the Koca--Yaman--Akt\"urk interval algorithm, denoted KYA. All reported timings use the archived experiment of 27 September 2026. No observations from the earlier manuscript experiment are combined with these data.

The timing experiment is restricted to $\mathcal H=\{0\}$, $s_0=0$, nonnegative inventory, and globally concave holding costs, with zero or free nonnegative terminal stock. Its 300 case timings do not measure the additional boundary-level implementation. Section~\ref{subsec:generalvalidation} reports the separate generalized-model correctness study.

\subsection{Original-model implementations and benchmark design}\label{subsec:impl}
KYA implements the original $F$, $G$, and $\widehat G$ recurrences and the shortest-path calculation \citep[Section~3]{koca2014}. The $F$ states contain $m$ past counts. For an exceptional production below $b_m$, $G$ uses $m$ past counts but only $m-1$ explicit future counts; the future count at $b_m$ is inferred using the original floor formula. For a quantity above $b_m$, $\widehat G$ uses $m$ future counts. This last table is omitted when all production is capped at $b_m$. The baseline therefore preserves the predecessor's smaller capacitated bound. It rolls layers and stores finite reachable states sparsely. It is not an explicit scan of the new algorithm's transition matrices.

The timed R-SMAWK code is the $K=1$ no-backlogging specialization of Corollary~\ref{cor:implemented}; it includes the prefix/suffix tables, merged orders, monotone domain scans, canonical rectangles, and backward sweep. Both programs evaluate the supplied nonlinear costs directly. They use sparse dictionaries and exact arbitrary-precision numbers; the theoretical bounds concern the array/cost-oracle model, not worst-case Python dictionary or bit-level runtimes. The KYA implementation requires strictly positive internal inventory within each interval, whereas R-SMAWK permits internal zeros; both represent the same horizon-level optimum. Generic zero-demand transitions charge actual zero-quantity costs. All timed instances have $p_t(0)=h_t(0)=0$.

The benchmark design is shown in Table~\ref{tab:design}. Each $(m,T)$ uses five base seeds, crossed with capped/uncapped production and zero/free terminal inventory. There are 75 base configurations, 150 serialized cost instances, and 300 model/convention cases. Related variants are not independent statistical samples.
\begin{table}[t]
\centering\singlespacing
\caption{Instance design. Each horizon uses five seeds and four capacity/terminal conventions.}\label{tab:design}
\begin{tabular}{cll}\toprule
$m$ & Positive breakpoints & Horizons $T$ \\\midrule
1 & 10 & 8, 16, 32, 48, 64 \\
2 & 6, 13 & 6, 10, 16, 22, 28 \\
3 & 4, 9, 15 & 4, 6, 10, 14, 18 \\\bottomrule
\end{tabular}
\end{table}
Demands are independent uniform draws from $\{0,\ldots,b_m\}$, so every timed case is feasible. Inside bounded piece $\ell$, the production formula is
\begin{equation}\label{eq:generator}
 f_{t\ell}(x)=a_{t\ell}x+z_{t\ell}+c_{t\ell}(b_\ell^2-x^2),
\end{equation}
with independent integer draws $a_{t\ell}\in\{0,\ldots,24\}$, $z_{t\ell}\in\{0,\ldots,39\}$, and $c_{t\ell}\in\{1,\ldots,4\}$. Every bounded interior is genuinely nonlinear and strictly concave. The available last ray is affine, with slope in $\{0,\ldots,19\}$ and intercept in $\{0,\ldots,49\}$. Set $p_t(0)=0$ and the cost at each positive shared breakpoint to the minimum of incident active formulas. Lowering an endpoint preserves concavity on each closed piece. All finite costs are nonnegative.

Holding costs are $h_t(s)=u_ts+v_t\min\{s,L_t\}$, with independent uniform integer coefficients $u_t\in\{0,\ldots,4\}$, $v_t\in\{0,\ldots,8\}$, and $L_t\in\{1,\ldots,7\}$. They are concave and can be nonaffine. The timed family has affine unbounded production tails; nonlinear tails and curved holding functions are examined separately in validation.

Capped and uncapped families share the base demands and formula coefficients. Removing the tail can also change the actual endpoint minimum at $b_m$. Their between-family comparison does not isolate the effect of a capacity constraint alone. Within each timed pair, however, both solvers receive exactly the same instance and terminal convention.

\subsection{Timing protocol and record provenance}\label{subsec:protocol}
The recorded experiment used single-threaded CPython 3.13.5 on an AMD EPYC 9V74 processor in a shared Linux container, with affinity pinned to one logical CPU. Each solver was warmed up once on a small instance. Case order was shuffled reproducibly; each case had three runs per solver, alternating the solver called first. Wall time was measured with \texttt{perf\_counter\_ns}; process CPU time was also recorded. The timer covered the complete objective-only solve, including preprocessing, orders, dynamic programming, searching, and in-call cleanup. Instance generation, explicit pre-call garbage collection, logging, and the correctness oracle were excluded. Ordinary garbage collection remained enabled during the call. No cache or state was reused across solves. Reconstruction was validated separately but was not timed.

The initial raw timing file contained NUL padding and lacked 108 observations for 18 complete cases; the cause was not established. Both solvers were rerun for all three repetitions on exactly these cases, selected solely by record incompleteness, using the same code, inputs, affinity, and protocol. The final dataset contains 1,692 retained original observations and 108 replacement observations, totaling 1,800 analyzed calls. The original damaged log, original case summaries, replacement observations, and affected-case list are preserved. Missing times were not inferred from summaries. Supplementary Material~S1 provides the provenance and a sensitivity summary excluding all repaired cases.

The source fingerprints, raw-objective consistency, complete call counts, and all timing aggregates were audited again during preparation of this version. The archived timings were not replaced in that audit. The accompanying \texttt{generate\_assets.py} derives the tables and figures directly from the raw records.

\subsection{Original-model exact-answer validation}\label{subsec:validation}
All 300 benchmark cases have identical KYA and R-SMAWK objectives, each confirmed by a separate inventory-state dynamic program. That formulation enumerates production and stock, without breakpoint counts, regeneration intervals, or Monge searching. By Corollary~\ref{cor:integer}, the oracle is exact on the integer benchmark inputs rather than a numerical grid approximation. Supplementary Material~S1 states its recurrence and proves its finite bounds.

A separate suite has 480 random integer cases, 60 cases with nonlinear concave unbounded production rays and exact rational costs, 60 exact rational quantity rescalings, and 40 targeted boundary cases. All 640 comparisons agree, including 132 infeasible cases. Every finite reconstructed R-SMAWK plan passes balance, domain, terminal-convention, and exact-cost checks. The suite also passes 500 complete-SMAWK tests, 1,000 partial-rectangle tests, and 60 merged-order tests. These checks were rerun successfully during preparation of the preceding publication draft; the records are preserved. The maximum finite objective difference is zero; no floating-point tolerance is used for objective agreement.

Boundary cases include mandatory terminal surplus, unavailable ranges, isolated quantities, colliding count weights, downward endpoint discontinuities, nonzero costs at zero, and curved holding functions. Rational rescalings are checked against the equivalent integer formulation using exact fractions. These results support implementation correctness but do not replace the mathematical proofs or test the unimplemented staircase routines. Table~\ref{tab:objectives} reports fixed replicate-zero examples, chosen independently of observed speed.
\begin{table}[t]
\centering\singlespacing
\caption{Exact objectives for replicate zero at the largest uncapacitated, free-terminal horizons.}\label{tab:objectives}
\begin{tabular}{rrrrrr}
\toprule
$m$ & $T$ & Seed & KYA & R-SMAWK & Inventory DP \\
\midrule
1 & 64 & 20367327 & 2376 & 2376 & 2376 \\
2 & 28 & 20463727 & 1657 & 1657 & 1657 \\
3 & 18 & 20562727 & 1367 & 1367 & 1367 \\
\bottomrule
\end{tabular}
\unskip
\end{table}

\subsection{Generalized-model implementation and validation}\label{subsec:generalvalidation}
A separate module, \texttt{extensions/extended\_lot\_sizing.py}, implements Algorithm~\ref{alg:main} with the canonical-rectangle searches of Algorithm~\ref{alg:rectangle} and plan reconstruction. It reuses frozen count-state and complete-matrix-search helpers without changing either timed solver or the archived performance data. Its boundary-indexed recurrences allow signed stock, several stock-cost regimes, and period-dependent domain availability.

All 624 policy-specific validation cases agree exactly with a finite inventory-state oracle. There are 39 infeasible cases; all 585 finite reconstructed plans pass balance, production-domain, stock-domain, terminal-policy, and exact-objective checks. The maximum finite objective gap is zero without a numerical tolerance. Selected cases also pass 192 full transition-scan comparisons, and all 96 original-model regression cases agree with both earlier structural solvers. The full scan is a diagnostic for the new matrices, not an implementation of Koca's generalized recurrence.

The suite covers ordinary backlogging, nonlinear piecewise-concave nonnegative holding, their combination, stock and backlog limits, minimum production quantities, exact rational rescalings, and targeted boundary/terminal examples. It contains 212 serialized cost models, with terminal-policy variants counted separately. The two-fractional-production example in Section~\ref{subsec:structure} is included. Supplementary Material~S1 gives all family counts and records and proves the lattice and finite-production bounds that make the inventory-state oracle exact, not a grid approximation.

These are correctness checks, not an extension timing benchmark or independent replication. In particular, the time ratios in Section~\ref{subsec:results} are not evidence of a measured speedup for backlogging or piecewise-concave stock costs. Neither the stronger staircase routines nor Koca's generalized recurrence were benchmarked on these cases.

\subsection{Timing results}\label{subsec:results}
For case $i$, let $\widetilde t_i^K$ and $\widetilde t_i^R$ be the three-run median wall times. We report
\begin{equation}\label{eq:ratios}
 r_i=\frac{\widetilde t_i^K}{\widetilde t_i^R},\qquad
 G(I)=\exp\left(\frac{1}{|I|}\sum_{i\in I}\log r_i\right).
\end{equation}
Values above one favor R-SMAWK. The geometric mean summarizes paired relative time; it is neither a ratio of total times nor a significance test.

R-SMAWK is faster on \NewWins\ cases and KYA on \OldWins. Table~\ref{tab:aggregate} gives $G=\GeoUncap$ for uncapacitated cases, $G=\GeoCap$ for capacitated cases, and $G=\GeoAll$ overall. The capacitated aggregate is effectively tied, despite differences between individual cases. This does not support uniform practical dominance.
\begin{table}[t]
\centering\singlespacing
\caption{Paired time comparison using three-run case medians. $G>1$ favors R-SMAWK.}\label{tab:aggregate}
\begin{tabular}{lrrrr}
\toprule
Family & Cases & R-SMAWK faster & KYA faster & $G$ \\
\midrule
Uncapacitated & 150 & 114 & 36 & 2.531 \\
Capacitated & 150 & 72 & 78 & 1.006 \\
All & 300 & 186 & 114 & 1.595 \\
\bottomrule
\end{tabular}
\unskip
\end{table}

Table~\ref{tab:times} reports every free-terminal size cell, the convention directly matching the predecessor's main model. Each displayed time is the median of the five instance medians. The reported cell ratio $R$ divides these two time aggregates; it is not $G$ and is not the median of the five paired ratios. The zero-terminal results and complete records are in Supplementary Material~S1.
\begin{table}[t]
\centering\singlespacing\small
\caption{Free-terminal timing results. Each time is the median of five three-run instance medians; $R$ is the ratio of these time aggregates.}\label{tab:times}
\setlength{\tabcolsep}{4pt}
\begin{tabular}{rr rrr rrr}
\toprule
& & \multicolumn{3}{c}{Uncapacitated} & \multicolumn{3}{c}{Capacitated} \\
\cmidrule(lr){3-5}\cmidrule(lr){6-8}
$m$ & $T$ & KYA (s) & R-SMAWK (s) & $R$ & KYA (s) & R-SMAWK (s) & $R$ \\
\midrule
1 & 8 & 0.000612 & 0.000749 & 0.82 & 0.000407 & 0.000647 & 0.63 \\
1 & 16 & 0.009243 & 0.004339 & 2.13 & 0.002605 & 0.003068 & 0.85 \\
1 & 32 & 0.141581 & 0.039559 & 3.58 & 0.037342 & 0.024966 & 1.50 \\
1 & 48 & 1.096887 & 0.155102 & 7.07 & 0.154150 & 0.090649 & 1.70 \\
1 & 64 & 3.947976 & 0.366945 & 10.76 & 0.464267 & 0.228154 & 2.03 \\
\addlinespace
2 & 6 & 0.000680 & 0.000832 & 0.82 & 0.000533 & 0.000796 & 0.67 \\
2 & 10 & 0.005416 & 0.004426 & 1.22 & 0.002558 & 0.003594 & 0.71 \\
2 & 16 & 0.071398 & 0.031275 & 2.28 & 0.025666 & 0.024344 & 1.05 \\
2 & 22 & 0.471173 & 0.111517 & 4.23 & 0.130587 & 0.089803 & 1.45 \\
2 & 28 & 2.333821 & 0.285644 & 8.17 & 0.368388 & 0.218658 & 1.68 \\
\addlinespace
3 & 4 & 0.000389 & 0.000563 & 0.69 & 0.000334 & 0.000590 & 0.57 \\
3 & 6 & 0.001411 & 0.001906 & 0.74 & 0.001078 & 0.001826 & 0.59 \\
3 & 10 & 0.030599 & 0.014822 & 2.06 & 0.006475 & 0.011014 & 0.59 \\
3 & 14 & 0.257344 & 0.069962 & 3.68 & 0.057691 & 0.057457 & 1.00 \\
3 & 18 & 1.941863 & 0.238541 & 8.14 & 0.248634 & 0.174139 & 1.43 \\
\bottomrule
\end{tabular}
\unskip
\end{table}
At the largest uncapacitated free-terminal horizons, the cell ratios are 10.76 for $(m,T)=(1,64)$, 8.17 for $(2,28)$, and 8.14 for $(3,18)$. The corresponding capacitated ratios are 2.03, 1.68, and 1.43. The smallest tested free-terminal groups favor KYA in all six series. Figures~\ref{fig:uncap} and~\ref{fig:cap} show how these ratios vary across the tested horizons. The connecting lines are not fitted complexity curves.

\begin{figure}[t]
\centering
\includegraphics[width=0.82\textwidth]{ejor-uncapped-free.pdf}
\caption{Uncapacitated, free-terminal instances. Points show the ratio $R$ of cell median times from Table~\ref{tab:times}. Each summarizes five cases and three runs per solver; values above one favor R-SMAWK.}\label{fig:uncap}
\end{figure}
\begin{figure}[t]
\centering
\includegraphics[width=0.82\textwidth]{ejor-capped-free.pdf}
\caption{Capacitated, free-terminal instances. Aggregation is the same as in Figure~\ref{fig:uncap}. Larger tested cells favor R-SMAWK, whereas the overall capacitated geometric-mean ratio is close to one.}\label{fig:cap}
\end{figure}

\subsection{Interpretation, limitations, and research workflow}\label{subsec:limitations}
The observations are consistent with a trade-off between matrix-search setup work and avoided state combinations. They do not identify the causal contribution of any one optimization: no ablation study was performed. Nor do these finite horizons establish the asymptotic exponents. The study uses $m\leq3$, one timed breakpoint grid per $m$, one synthetic family, five seeds per cell, and three repetitions. The small cases and near-ties are particularly sensitive to interpreter and shared-container effects.

Peak memory, schedule-reconstruction time, industrial instances, commercial solvers, compiled implementations, and the stronger staircase routines were not benchmarked. The inventory-state dynamic program is a correctness oracle, not a timed competitor; its favorable small integer ranges do not support claims about superiority over pseudo-polynomial methods. No general crossover horizon or industrial performance claim follows from these data.

ChatGPT (OpenAI; GPT-6 Astra Pro) was used in developing the reduction and proof drafts, writing the solver and validation code, scripting experiments, and preparing the manuscript and reproducible Matplotlib visualizations. Objective values and elapsed times were obtained by executing the archived programs, not by language-model prediction. The inventory oracle is independent in formulation, but all implementations were developed in the same AI-assisted workflow; this is not an independent replication by another research group.

\FloatBarrier
\section{Special cases and scope}\label{sec:special}
The model in Section~\ref{sec:model} and the single algorithm in Section~\ref{sec:algorithm} include the following cases. These are specializations of the main result, not separate algorithms requiring duplicate proofs.

\subsection{Concave holding costs without backlogging}
Set $s_0=0$, $\mathcal H=\{0\}$, and allow only the ray $[0,+\infty)$ with a concave holding cost. Boundary intervals are the usual zero-inventory regeneration intervals. Theorem~\ref{thm:complexity} reduces to $\Om(T^{m+2}\alpha(T+2))$ deterministic time or $\Om(T^{m+2})$ expected Las Vegas time, and Corollary~\ref{cor:implemented} gives $\Om(T^{m+2}\log(T+2))$ for rectangle searching. Both zero and free nonnegative terminal stock are represented. This is the shared problem underlying the comparison with the main algorithm of \citet{koca2014} and the timed instances in Section~\ref{sec:experiments}.

\subsection{Backlogging}
For ordinary backlogging, let $\mathcal H=\{0\}$ and allow both half-lines, with costs concave separately on each. Then $K=1$ and the same $\Om$ bounds hold. Internal stock between zero-boundary times need only be nonzero; its sign may change between successive periods without a zero stock value. The perturbation proof already uses the appropriate concave piece at each affected period.

The terminal set must match the intended service requirement. Taking $E=\{0\}$ clears inventory and backlog, whereas $E=[0,+\infty)$ serves all demand but permits surplus. Taking $E=\mathbb R$ intentionally allows a final backlog with its specified cost. The latter is not a substitute for a demand-clearing terminal convention.

\subsection{Piecewise-concave stock costs and stock limits}
Include all common finite holding--backlog breakpoints in $\mathcal H$, together with zero, the initial stock, and any finite stock-domain limits. Assumption~\ref{ass:stock} permits positive and negative stock regimes simultaneously. Inventory or backlog limits at these levels are enforced by removing unavailable stock domains. Their availability may depend on the period. For fixed $K$, these cases preserve the exponent of $T$ in the main bounds. Thus the additional holding-cost and backlogging cases described in Section~3.4 of \citet{koca2014} fit the same matrix formulation under the stated assumptions.

\subsection{Production capacities, minimum quantities, and regimes}
A production capacity at a common breakpoint removes all domains above it. A minimum order quantity $L$ and capacity $C$ allow the production domain $\{0\}\cup[L,C]$. The usual two-regime regular-production/subcontracting case has one finite production breakpoint. Table~\ref{tab:extcases} specializes the bounds for common production thresholds, which must belong to the supplied production grid. Its predecessor column refers to that paper's stated dynamic program, not to a claim that these are the best bounds currently known for each narrower special case.

\begin{table}[ht]
\centering\small
\caption{Specializations with fixed stock-boundary count. The rectangle implementation has an additional logarithmic factor relative to the expected bound.}\label{tab:extcases}
\begin{tabular}{@{}lccc@{}}
\toprule
Model & Koca et al. DP & Deterministic staircase & Expected Las Vegas\\
\midrule
Common production capacity ($m=1$) & $O(T^4)$ & $O(T^3\alpha(T+2))$ & $O(T^3)$\\
Minimum quantity and capacity ($m=2$) & $O(T^6)$ & $O(T^4\alpha(T+2))$ & $O(T^4)$\\
Two production regimes ($m=1$) & $O(T^5)$ & $O(T^3\alpha(T+2))$ & $O(T^3)$\\
\bottomrule
\end{tabular}
\end{table}

\subsection{Affine production pieces}
If $p_t(x)=\kappa+\sigma x$ on a search interval, then
\begin{equation}\label{eq:affine}
 M_t(i,j)=(A_i-\sigma\beta_i+\kappa)+(C_j+\sigma a_j).
\end{equation}
The column score is independent of the row, including when columns have different ending boundary levels. Processing rows in reverse order makes both window endpoints nondecreasing, so a monotone deque maintains the minimum score with each column entering and leaving at most once. If an endpoint's actual cost differs from an adjoining affine formula, use separate singleton endpoint searches and open affine interiors with the appropriate strict comparisons. The closed-domain structural assumption is unchanged. This specialization is not a new priority claim for piecewise-linear lot sizing; it is consistent with the related bounds of \citet{ou2017,papadopoulos2026}.

\subsection{Limits of the result}
The theorem requires a fixed number of common production and stock boundary levels, actual concavity on the supplied closed domains, consistent endpoint values, and exact cost queries. Arbitrary holding functions, arbitrary half-open endpoint conventions, multiple items with shared capacities, and lost-sales decisions are not covered. Period-specific production thresholds or stock limits may be represented by their common unions, but the fixed-$m,K$ bounds do not give a uniform guarantee independent of the size of those unions. The expected bound matches this algorithm's explicit state order, not a lower bound for the optimization problem.

\section{Conclusions}\label{sec:conclusion}
Fixing a designated production period exposes a piecewise-Monge transition matrix for lot sizing with nonlinear piecewise-concave production costs. Searching these matrices avoids explicit combinations of left and right count states. The unified boundary-labelled prefix and suffix tables include backlogging and piecewise-concave holding costs, with a fixed number $K$ of common stock levels. Reused orders and backward evaluation give $O_{m,K}(T^{m+2}\alpha(T+2))$ deterministic and $O_{m,K}(T^{m+2})$ expected Las Vegas bounds. The result covers the predecessor's additional model cases under explicit production, stock-domain, and terminal assumptions.

The implemented rectangle/SMAWK variant has one additional logarithmic factor. All 300 no-backlogging benchmark objectives agree with the predecessor implementation and an inventory-state oracle; the original 640-case validation suite also passes. A separate 624-case suite confirms exact objective agreement and feasible-plan reconstruction for the extended implementation, without supplying extension timing evidence. Larger uncapacitated tests show substantial gains, while many smaller and capacitated cases favor the predecessor. The contribution is therefore an exact reduction and a documented implementation comparison, not a claim of universal practical superiority. Implementing the stronger primitives, measuring memory and reconstruction, obtaining independent replication, and testing broader cost families are the next empirical steps. A deterministic $\OmK(T^{m+2})$ nonlinear-batch method would require further algorithmic work.

\section*{Data and code availability}
The accompanying reproducibility archive contains both timed solvers, the inventory-state oracle, serialized instances, all raw timing records, validation programs, source fingerprints, and the original and replacement logs. Supplementary Material~S1 documents reproduction and the oracle bounds. The tables and figures are regenerated by \texttt{generate\_assets.py}. The separate \texttt{extensions/} directory contains the boundary-level solver, its exact validation driver, all 624 case records and serialized inputs, and source hashes. The archive does not implement the Klawe--Kleitman or Chan staircase algorithms.

\section*{Declaration of competing interest}
The author declares no competing interests.
\section*{Declaration of generative AI and AI-assisted technologies in the manuscript preparation process}
ChatGPT (OpenAI; GPT-6 Astra Pro) was used to develop and edit mathematical exposition, literature positioning, code, LaTeX source, and the presentation of computational results. Its research role is described in Section~\ref{subsec:limitations}.

\end{document}